\documentclass[11pt, english]{article}
\usepackage[T1]{fontenc}
\usepackage{graphicx}
\usepackage{graphicx}
\usepackage{amsmath}
\usepackage{amsthm}
\usepackage{hyperref}
\usepackage{xspace}
\usepackage{paralist}

\usepackage{amssymb}
\usepackage{url}
\newcommand{\sortN}{\mbox{\rm Sort}\left(S,L\right)}

\newcommand{\tree}              {T}

\newcommand{\shortbound}{\frac{\volShort}{B} \log_{M/B}\frac{\volShort}{B}}
\newcommand{\longbound}{\sum_{i = 1}^{k}{\left(\frac{\volLong_{i}}{B} \log_{M/w}\frac{\volLong_{i}}{w}\right)}}

\newcommand{\bfsub}{\frac{\volLong}{B}\log_{M}\volShort + \frac{S}{B}}

\newcommand{\bfslb}{\frac{\volLong}{B}\log_{M}k}

\newcommand{\dfs}{\frac{\volLong}{w}\log_{B}k + k\log_{B}\volShort + \frac{L}{B} + \frac{S}{B}}

\newcommand{\plesl}{\mbox{\rm PLE}\left(S, L\right)}
\newcommand{\ple}{\mbox{\rm PLE}}
\newcommand{\ramsort}{\mbox{\rm 2RAMSORT}}

\newcommand{\kk}{$k$-$k$\xspace}
\newcommand{\sk}{$S$-$k$\xspace}
\newcommand{\klw}{$k$-$\tilde{k}$\xspace}

\newcommand{\lw}{\tilde{k}}  
\newcommand{\setN}              {N}

\newcommand{\intr}{I}
\newcommand{\defn}[1]{\emph{#1}}
\newcommand{\deflabel}[1]        {\label{def:#1}}

\newcommand{\thmlabel}[1]       {\label{thm:#1}}
\newcommand{\TT}{\mathcal{T}}

\newcommand{\setShort}          {\mathcal{S}}
\newcommand{\setLong}           {\mathcal{L}}
\newcommand{\volShort}          {S}
\newcommand{\volLong}           {L}
\newcommand{\numShort}          {S}

\newcommand{\lenLong}           {w}
\newcommand{\numStripe}         {k}
\newcommand{\eqlabel}[1]        {\label{eq:#1}}


\usepackage{bbm}
\usepackage{multicol}
\usepackage{multirow}
\usepackage{babel}
\usepackage{algorithm}
\usepackage{algorithmic}
\floatname{algorithm}{Procedure}

\newcommand{\thmref}[1]{Theorem~\ref{thm:#1}}
\newcommand{\lemref}[1]{Lemma~\ref{lem:#1}}
\newcommand{\lemlabel}[1]       {\label{lem:#1}}

\newtheorem{theorem}{Theorem}
\newtheorem{observation}[theorem] {Observation}
\newtheorem{lemma}[theorem] {Lemma}
\newtheorem{definition}[theorem] {Definition}

\usepackage{wrapfig}

\usepackage{enumitem}
\setitemize{noitemsep,topsep=0pt,parsep=0pt,partopsep=0pt}
\author{
  Michael A. Bender\thanks{
   Stony Brook University,
   \texttt{bender@cs.stonybrook.edu}. This work was supported in part by NSF grants 
CCF-1725543, 
CSR-1763680,  
CCF-1716252, 
CCF-1617618, 
CNS-1938709,   
and by Sandia National Laboratories.}
  \and
   Mayank Goswami\thanks{
   Queens College, CUNY,
   \texttt{mayank.goswami@qc.cuny.edu}. Supported by NSF grants CRII-1755791 and CCF-1910873.}
  \and
  Dzejla Medjedovic\thanks{
   International University of Sarajevo,
   \texttt{dzmedjedovic@ius.edu.ba}.}
  \and
 Pablo Montes\thanks{
  Google Inc.,
  \texttt{pabmont@gmail.com}.}	
  \and
  Kostas Tsichlas\thanks{
   Aristotle University of Thessaloniki,
   \texttt{tsichlas@csd.auth.gr}.}
}

\date{}
\title{Batched Predecessor and Sorting with Size-Priced Information in External Memory}

\begin{document}
\maketitle              
\begin{abstract}
In the unit-cost comparison model, a black box takes an input two items and outputs the result of the comparison. Problems like sorting and searching have been studied in this model, and it has been generalized to include the concept of priced information, where different pairs of items (say database records) have different comparison costs. These comparison costs can be arbitrary (in which case no algorithm can be close to optimal (Charikar et al. STOC 2000)), structured (for example, the comparison cost may depend on the length of the databases (Gupta et al. FOCS 2001)), or stochastic (Angelov et al. LATIN 2008). Motivated by the database setting where the cost depends on the sizes of the items, we consider the problems of sorting and batched predecessor where two non-uniform sets of items $A$ and $B$ are given as input.

(1) In the RAM setting, we consider the scenario where both sets have $n$ keys each. The cost to compare two items in $A$ is $a$, to compare an item of $A$ to an item of $B$ is $b$, and to compare two items in $B$ is $c$. We give upper and lower bounds for the case $a \le b \le c$, the case that serves as a warmup for the generalization to the external-memory model. Notice that the case $b=1, a=c=\infty$ is the famous ``nuts and bolts'' problem. 

(2) In the Disk-Access Model (DAM), where 
transferring elements  between disk and internal memory is the main bottleneck,
we consider the scenario where elements in $B$ are larger than elements in $A$. The larger items take more I/Os to be brought into memory, consume more space in internal memory, and are required in their entirety for comparisons.

A key observation is that the complexity of sorting depends heavily on the interleaving of the small and large items in the final sorted order. If all large elements come after all small elements in the final sorted order, sorting each type separately and concatenating is optimal. However, if the set of predecessors of $B$ in $A$ has size $k\ll n$, one must solve an associated batched predecessor problem in order to achieve optimality.

We first give output-sensitive lower and upper bounds on the batched predecessor problem, and use these to derive bounds on the complexity of sorting in the two models. Our bounds are tight in most cases, and require novel generalizations of the classical lower bound techniques in external memory to accommodate the non-uniformity of keys.

\textbf{Keywords:} Priced information  \and sorting \and batched predecessor \and external memory \and output-sensitive algorithms.
\end{abstract}


\section{Introduction}

In most published literature on sorting and other comparison-based
problems (e.g., searching and selection), the traditional assumption is
that a comparison between any two elements costs one unit, and the
efficiency of an algorithm depends on the total number of comparisons
taken to solve the problem. In this paper, we study a natural extension
to sorting, where the cost of a comparison between a pair of elements can
vary, and the comparison cost is the function of the elements being
compared. 

We work in both the  the random-access-machine (RAM)  and the
disk-access-machine (DAM)~\cite{AggarwalVi88} models (described below).
We derive worst-case upper and lower bounds for comparison-based sorting and batched predecessor.

In the RAM model, we assume that comparisons between a pair of keys have
an associated cost that depends on the ``type'' of keys involved. As a
toy problem, consider the case when we have keys of two types---$n$ red
keys and $n$ blue keys. A comparison between a pair of red keys costs
$a$, between a red key and a blue key costs $b$, and between a pair of
blue keys costs $c$. Without loss of generality we can assume $a<c$, which gives rise to three cases to be considered, $a<b<c$, $a<c<b$, and $b<a<c$ (when $b=1$ but
$a=c=\infty$ corresponds to the well-known nuts and bolts problem
\cite{AlonBlFi94}.) Traditionally such problems have been studied in the
context of priced information \cite{charikar2000query,cicalese2005new},
where the cost of an algorithm is studied in the competitive analysis setting. These comparison costs can be arbitrary (in which case no algorithm can be close to optimal \cite{charikar2000query}), structured (for example, the comparison cost may depend on the length of the databases \cite{gupta2001sorting}), or stochastic \cite{angelov2008sorting}. 

In this paper we consider the setting of \cite{gupta2001sorting}, where the price of the information depends on the length of the keys being compared. However, we depart from the competitive analysis model by considering the worst-case cost,
but parameterized by the specific distribution (or the ``interleaving'')
of the elements in the final sorted order.

Then we turn to the \emph{disk access machine} (\emph{DAM}) model (also
called the \emph{external-memory model} or the \emph{I/O
model})~\cite{AggarwalVi88}. This model captures an essential aspect of
modern computers---that computation is fast but transferring data between
levels of a memory hierarchy is slow.  Data is transferred from an
infinite external disk to a RAM of size $M$ in blocks of size $B$;
the cost of the algorithm is measured by the number of block transfers
(I/Os) that it uses.

In the DAM model the notion of comparison cost naturally comes into play
when elements have different \emph{sizes} (or \emph{lengths}). In this
model, comparisons come for free once the elements are in RAM.
However, it is cheaper to transfer
short elements into RAM than long elements.  For example,  if a key has
length $w$, where $w\leq B$, then up to $B/w$ keys can be fetched with
one I/O; similarly, if $w\geq B$, then it takes $w/B$ I/Os to bring that
key into memory.
Moreover, a long element, when brought into RAM, will displace
a larger volume of keys than a short element.\footnote{Note that the 
DAM model actually models the memory transfers between any two levels 
of the memory hierarchy. Although this paper adapts the terminology 
of I/Os between RAM and disk, the model also 
applies to cache misses between cache and RAM. 
In the former case, elements could be larger than $B$ but are (essentially always) 
much smaller than $M$. In the latter case, elements 
could have a length that is a nontrivial fraction of $M$.}

Consider the following generalization of the RAM problem to DAM: we are
given $S$ keys of unit size (short keys) and $L/w$ (long) keys of size
$w$ each (total volume $L$). 
%
%
What is the optimal sorting algorithm for when there are two key sizes?  
We want to express our results parameterized by the interleaving
of the elements in their final sorted order.  Let interleaving parameter
$k$ denote the number of consecutive runs of large keys (\emph{stripes})
in the final sorted order. In other words, the set of predecessors of $L$ in $S$ has size $k$.  We want to express the performance of
the sorting algorithm, as a function of $S$, $L$, $w$, and $k$.   

Sorting with two key lengths helps illustrate a special connection
between sorting and batched searching. Consider the following batched searching
problem, which we call the \emph{PLE} (placement of large elements)
problem.  We have $S$ keys of unit size (short keys), which are given in
sorted order.  We have $L/w$ (long) keys of size $w$ each, and the
objective is to find which short key is the immediate predecessor of each
long key.  The PLE problem is a lower bound on the sorting problem
because it starts off with more information than the original sorting and
asks to do less.  

Often, the complexity of PLE dominates the complexity of sorting.
However, obtaining lower bounds on the PLE presents several challenges.
First, for many reasons (we want bounds in terms of $k$, keys have
different sizes, and there are many searches happening in batch)
standard information-theoretic lower bounds bounding how much information
is learned do not immediately apply: different I/Os can ``learn'' very
different amount of information.  Second, because of the nonuniformity of the keys, the complexity of PLE turns out
to be a minimum of two terms, each optimal for a certain range of values
of $w$ (the length of large elements).  Each case requires different
techniques.  Third, we have to take preprocessing into account. PLE is a
batched searching problem with a nontrivial preprocessing-query
tradeoff \cite{BenderFaGo14}. However, in our context it is a subproblem of sorting, and
other parts of sorting dictate how much preprocessing is allowed.

\paragraph{Related Work.} In RAM, algorithms that work with inputs with priced
information have been studied before\cite{charikar2000query,cicalese2005new,gupta2001sorting,angelov2008sorting}.In this setting the results are presented using a competitive analysis
framework.  Another example of varying comparison costs is the well-known
nuts-and-bolts problem~\cite{AlonBlFi94}. Interleaving-sensitive lower bounds
and batched searching in RAM are related to lower bounds for  sorting
multisets~\cite{MunroSp76} and distribution-sensitive
set-partitioning~\cite{Elmasry05}.

Aggarwal and Vitter~\cite{AggarwalVi88} introduced the external-memory (DAM)
model.  The lower bounds they establish for fundamental comparison-based
problems were generalized by Arge, Knudsen, and Larsen~\cite{ArgeKnLa93}, and by
Erickson~\cite{Erickson05} to the external algebraic decision tree model.
Prominent examples studying lower bounds on batched and predecessor searching
are found in \cite{Arge03,ArgePrRa98,BenderFaGo14}.

Most previous work that considers variable-length keys does so in the context
of B-trees~\cite{McCreight77,DiehrFa84,LarmoreHi85,PinchukSh92,BenderHuKu10}. 

\paragraph{Relation to string sorting.} Arge et al.~\cite{ArgeFeGr97} study the I/O complexity
of sorting strings in external memory. The authors consider different models of key divisibility and derive upper and lower bounds for each model.
The problem is different from ours because strings are not atomic: they can
be broken into their individual characters which can reduce the I/O complexity
of sorting. 

Many systems are designed to be consistent with the notion of indivisible keys.
For example, sorting and searching libraries such as GNU Sort~\cite{gnu-sort}
or Oracle Berkeley DB~\cite{berkeley-db}  allow (or require) one to pass in a
comparison function as a parameter. Note that the algorithms and lower bounds
for indivisible keys in this paper also hold as the worst-case lower bounds for
string sorting --- when the entire string is necessary in memory in order to
break the tie.

\paragraph{Organization.} In Section $2$ we present the  RAM version of
our problem. We present the sorting problem in external-memory in
Section $3$ and relate it to the batched predecessor problem. We discuss the challenges in extending the RAM solution to
this case in section $4$. 

We then derive lower and upper bounds on the batched predecessor problem in sections $5$ and $6$, respectively. We end with open problems in Section $7$. Due to space constraints, complete proofs are relegated to the appendix.
\vspace{-1mm}
\section{Warmup: the RAM version}
\vspace{-1mm}
\noindent\textbf{Two types, RAM version (\boldmath $\ramsort$).} The
input is $n$ red and $n$ blue keys, and the output is the sorted sequence
of all keys. A comparison between a pair of red keys costs $a$, between a
red key and a blue key costs $b$, and between a pair of blue keys costs
$c$. Without loss of generality we can assume that $a<c$. 

\noindent\textbf{Interleaving-sensitive analysis.} The optimal sorting
cost in RAM depends on the final interleaving of the elements in the
final sorted order. If in the final sorted order all red keys come before
all blue keys, then $\Theta(an \log n + c n \log n)$ is the optimal total
comparison cost, because the algorithm that separately sorts and
concatenates is optimal.  However, if the red and blue keys alternate in
the final sorted order, then no blue-blue comparisons are ever required
to sort.

\noindent\textbf{Stripes and the interleaving parameter \boldmath $k$.} A
consecutive run of red or blue keys in the final sorted order is called a
\textit{stripe}. Define $k$ to be the number of blue stripes, and let $\ell_{i}$ (respectively $s_{i}$)
be the number of blue (respectively red) keys in stripe $i$. The notation $\ell$ and $s$ are chosen to correspond with the later sections when red elements will be small and blue elements will be large.

\begin{theorem}\label{RAM}
$\ramsort$ has the following comparison cost complexity for the case $a \le b \le c$:

$$\Theta ( a n \log n + b(k \log n + n \log k) + c \sum_{i=1}^{k} \ell_{i} \log \ell_{i})$$
\end{theorem}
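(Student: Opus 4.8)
The plan is to prove matching upper and lower bounds separately, since the three terms in the claimed complexity correspond to three distinct ``resources'' that any correct algorithm must expend and that a carefully designed algorithm can match.

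\textbf{Upper bound.} I would give an explicit algorithm achieving $O(a n \log n + b(k \log n + n \log k) + c \sum_i \ell_i \log \ell_i)$. First, sort the $n$ red keys using only red--red comparisons; this costs $O(a n \log n)$ via mergesort. Next, I need to discover the interleaving structure: for each blue key, determine which red stripe it falls into. The $k$ blue stripes are separated by red keys, so there are effectively $k+1$ ``slots'' among the sorted reds (actually the blue keys land in $k$ of them). The subtlety is that we do not know $k$ in advance, but we can proceed obliviously: repeatedly, take the current set of blue keys and the current partition of reds into intervals, and binary-search each blue key against the reds. A single round of binary-searching all $n$ blue keys into $n$ sorted reds costs $O(bn\log n)$ red--blue comparisons — but this is too expensive unless $k$ is large. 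Instead I would use the standard ``batched searching into $k$ buckets'' trick: since all blue keys ultimately occupy only $k$ distinct red-slots, I can first sort the blue keys \emph{among themselves restricted to cross-comparisons}... but blue--blue comparisons cost $c$. The right approach is: distribute blue keys into the $k$ red-slots by a process costing $O(b n \log k)$ (a $k$-way distribution / merge against the red separators, each blue key participating in $O(\log k)$ comparisons, each against a red key), plus $O(bk\log n)$ to actually locate the $k$ active slots among the $n$ reds (one binary search per stripe boundary). Finally, within each blue stripe $i$ of size $\ell_i$, sort using blue--blue comparisons at cost $O(c\,\ell_i \log \ell_i)$; summing gives the third term. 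Concatenating reds and sorted blue stripes yields the output. I would need to argue the distribution step is implementable with the claimed cost when $k$ is unknown — e.g.\ by a doubling / recursive-median argument, or by first extracting the $k$ predecessors.

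\textbf{Lower bound.} I would prove each of the three terms is a separate lower bound, then take the max (which is within a constant factor of the sum). (i) The $a n \log n$ term: the red keys must be fully sorted in the output, and by an adversary/decision-tree argument the only comparisons that distinguish two red-only permutations and reveal red order are red--red (cost $a$) or red--blue (cost $b \ge a$) — so even charging everything at rate $a$, we need $\Omega(an\log n)$; more carefully, an adversary that fixes all blue keys far from all reds forces $\Omega(n\log n)$ comparisons each of cost $\ge a$. (ii) The $c \sum_i \ell_i \log \ell_i$ term: an adversary places the stripes so that within stripe $i$, only blue--blue comparisons (cost $c$) give information about the relative order of those $\ell_i$ blue keys — red--blue and red--red comparisons with keys outside the stripe are uninformative about intra-stripe order — so $\Omega(\ell_i \log \ell_i)$ comparisons of cost $c$ per stripe. (iii) The $b(k\log n + n\log k)$ term: this is the PLE-type bound. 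Each blue key must be placed into one of $k$ active slots, requiring it to ``learn'' $\log k$ bits, and red--red comparisons convey nothing about where a blue key goes, so $\Omega(n \log k)$ comparisons of cost $\ge b$ involving blue keys; additionally, identifying which $k$ of the $n$ gaps are the active slots requires $\Omega(k \log n)$ comparisons (a selection-type argument: the $k$ stripe-boundary red keys must be located among $n$ reds). Each of these is an information-theoretic counting bound in the decision-tree model, with an adversary argument ensuring the relevant comparisons are charged at rate $\ge b$.

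\textbf{Main obstacle.} The hard part will be the $b(k \log n + n \log k)$ lower bound — in particular making the two sub-terms rigorous simultaneously and ruling out that cheap red--red comparisons ($a \le b$) or shared work across blue keys beats the naive count. The $n \log k$ part needs an adversary that keeps the $k$ slots ``symmetric'' so that each blue key genuinely requires $\log k$ bits of red--blue comparison information and these costs cannot be amortized; the $k \log n$ part needs to force the algorithm to pin down $k$ specific separating red keys to within exact position, which is delicate because the algorithm might get this information as a byproduct of sorting the reds — so the adversary must decouple ``sorted order of reds'' from ``which reds are stripe boundaries.'' I expect this to require a two-phase or product-adversary construction, analogous to the PLE lower bounds developed later in the paper, specialized to the RAM cost model. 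The upper-bound side's main annoyance is handling unknown $k$ within the stated bound, but that is a routine doubling argument by comparison.
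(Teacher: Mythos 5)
Your decomposition into the three cost terms and the overall plan (separate upper and lower bounds; algorithm that sorts reds, places blues, then sorts blue stripes) matches the paper's. The $an\log n$ and $c\sum_i\ell_i\log\ell_i$ lower bounds are also proved the same way. The differences, and the genuine gaps, are in the middle term and in the upper bound.

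For the $b(k\log n + n\log k)$ lower bound, you propose a two-phase adversary and explicitly flag that you do not yet know how to ``decouple the sorted order of reds from which reds are stripe boundaries,'' nor how to rule out amortization across blue keys. The paper sidesteps all of this with one move you did not hit on: it considers the instance where the \emph{reds are handed to the algorithm already sorted}, so every red--red comparison is information-free, and every useful comparison has cost at least $b$. Then it is a single counting argument, not an adversary: the algorithm must distinguish at least $\binom{n}{k}S(n,k)\ge (n/k)^k k^{n-k}$ outcomes (placements of $k$ stripes times contents of each stripe), giving $\Omega\big(b(k\log(n/k) + (n-k)\log k)\big)$. You also miss a real subtlety here: expanding this gives $k\log n + n\log k - 2k\log k$, and the subtracted $2k\log k$ is not obviously negligible. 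The paper handles it with a case split: if $k\le n/3$ the slack is absorbed; if $k>n/3$ it switches to a different instance (reds sorted, $k$ unsorted ``representatives'' given, remaining blues told to be maximal) to extract $\Omega(bn\log n)$ directly, which dominates in that regime. Your plan of ``prove $k\log n$ and $n\log k$ as separate adversary bounds and combine'' would still need an argument for why they add rather than max, and it asserts $k\log n$ where the direct count only yields $k\log(n/k)$, so as written it would not go through.

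On the upper bound you identify the target costs correctly but leave the key step (placing the blues in $O(b(k\log n + n\log k))$ without knowing $k$) as an acknowledged open item, gesturing at doubling or ``first extracting the $k$ predecessors.'' The paper's algorithm resolves this cleanly by maintaining two search trees: a static $\mathcal{T}_r$ on all $n$ sorted reds, and a dynamic $\mathcal{T}_b$ on the at-most-$2k$ border reds discovered so far. Each blue first descends $\mathcal{T}_b$ (cost $O(b\log k)$ each, $n$ times, giving $O(bn\log k)$); only a blue that lands outside every known stripe descends $\mathcal{T}_r$ (cost $O(b\log n)$, and this happens at most $k$ times, giving $O(bk\log n)$), after which the new border reds are inserted into $\mathcal{T}_b$. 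This makes $k$ unknown a non-issue and gives exactly the claimed bound. So: right skeleton, but the core of the $b$-term lower bound and the blue-placement step of the upper bound are not actually established in your proposal.
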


\begin{proof}
We are interested in the version that is most relevant to us, the case when $a<b<c$. This is because red elements can be considered small, blue elements can be considered large, and under the natural setting where comparisons involving red elements cost less than those involving blue elements, we get that $a<b<c$.

\noindent\textbf{Lower bounds.} First, the number of permutations any algorithm for $\ramsort$ must achieve is at least $n!$. Any comparison reduces these by a factor of at most $2$, and the cheapest comparison costs $a$, thus giving a $a \log n! = \Omega(a n\log n)$ lower bound.

Second, consider the instance where the red elements are already sorted for free, the stripes of blue elements are already provided for free, and one is required to finish sorting. In this case, no comparison involving a red element is useful, the only comparisons available cost $c$, and these must be used to sort the contents of each stripe separately. This gives us a lower bound of $\Omega(c \sum_{i=1}^{k} \ell_{i})$.

Proving the second term as a lower bound involves the batched predecessor problem. Consider the instance where the red elements are sorted for free, and the algorithm is just required to discover the content of the $k$ blue stripes. This is identical to the batched predecessor problem on the blue elements, and any sorting algorithm also solves this subproblem. 

There are at least ${n \choose k} S(n,k)$ permutations to consider, where $S(n,k)$ is the \textit{Stirling number of the second kind}\footnote{this is the number of ways to partition a set of size $n$ into $k$ non-empty, disjoint subsets.}. This is because there are ${n \choose k}$ positions to place the blue stripes among the red elements in, and $S(n,k)$ distinct possibilities for the contents of the blue stripes.

An easy lower bound on $S(n,k)$ is $k^{n-k}$, which is achieved by fixing an ordering of the $n$ blue elements, sending the first $k$ to a distinct subset, and now for the remaining $n-k$ blue elements one has $k$ possibilities for each one.

Thus we get that the total number of permutations is at least ${n \choose k}S(n,k)$, which is at least $(n/k)^{k} k^{n-k}$. Since red elements are already sorted, comparisons of cost $a$ are useless, and the cheapest available comparisons are those costing $b$. Thus we get a lower bound of $\Omega(b (k \log (n/k) + (n-k) \log k))$, which equals $\Omega(b (k \log n + n \log k - 2k \log k))$ .

If $k \leq n/3$, $n \log k - 2k \log k \geq \frac{n}{3} \log k$, and we get the claimed lower bound. If $k> n/3$, we claim that $b n \log n$ is a lower bound, which matches our lower bound for this case. Consider the instance where red elements are sorted, $k$ blue representatives of each stripe are given (unsorted) and one is told that the remaining $n-k$ blue elements are all larger than the largest red elements. To finish the batched predecessor problem, there are at least ${n \choose k} k!$ permutations to check (find where to place the representatives, find the order in which representatives should go). No comparison between red elements are useful, the cheapest available comparison costs $b$, so we get a  lower bound of $b \log ({n \choose k} k!)$, which since $k>n/3$, is $\Omega(b n \log n)$.

Since we derived the lower bound on a constant number of instances of $\ramsort$, we can claim a lower bound of the maximum complexity of these instances, which in turn is the same as the sum of the complexities in $\Omega(.)$ notation, since we only have constantly many such instances.

\noindent\textbf{Upper bounds for the case $a<b<c$.} The algorithm proceeds in a similar
fashion:
\begin{enumerate}
\item Sort the $n$ red elements using cost $a$ comparisons.

\item  Build a binary tree $\mathcal{T}_{r}$ on the sorted red elements. The algorithm also maintains
a binary tree $\mathcal{T}_{b}$ (initially empty) on the set of discovered “border” of
red elements, i.e., the red elements which immediately precede and succeed
a discovered stripe.
A blue element is first sent down $\mathcal{T}_{b}$ to find whether it belongs to an already
discovered stripe. If it does not belong to an already discovered stripe, it is
bound to discover a new stripe, and it is sent down $\mathcal{T}_{r}$. The new bordering
red elements are then inserted into $\mathcal{T}_{b}$. It is clear that only $k$ blue elements
go down $\mathcal{T}_{r}$, whereas every blue element may go down $\mathcal{T}_{b}$, which has at
most $k$ leaves. All the comparisons in this step are cost $b$.

\item Sort the stripes of blue elements using type c comparisons.
\end{enumerate}
One can easily verify that the running time of this algorithm matches the one
in the theorem statement.
\end{proof}
\vspace{-2mm}\section{Sorting and Batched Predecessor in External Memory with Size-Priced Information}

The input to the two-sized sorting and batched predecessor problems are $S = \{s_{*}\}$ (the small records) and $L=\{\ell_{*}\}$ (the large records, each of size $1 < w \le M/2$). A set of large elements forms a \defn{stripe} if for each pair of large elements $\ell_{i}$ and $\ell_{j}$ in the stripe, there does not exist a small element between $\ell_{i}$ and $\ell_{j}$ in the final sorted order. Let $k$ be the number of large-element stripes, and let the large-element stripes be $\setLong_1,\setLong_2,\ldots, \setLong_k$, as they are encountered in the ascending sorted order. The parameters in the complexity analysis of sorting and batched predecessor are thus $S$, $L$, $w$, $k$, and $\{L_{i}\}_{i=1}^{k}$.

\begin{definition}[\defn{Two-Sized Sorting $\sortN$}] 
The input is an (unsorted) set of elements $N = S \cup L$.  
Set $S$ consists of $S$ unit-size elements, and  $L$ consists of $L/w$ elements, each of size $\lenLong$, where\footnote{We overload notation  for 
convenience of presentation. We assume $w \ge B$ also  for the convenience of presentation. Our bounds hold for any $1 < w \le M/2$; we extend our results to this entire range in Appendix~\ref{allcases}.} $B \le \lenLong \le M/2$. 
The output comprises the elements in $\setN$, sorted and stored contiguously in external memory.
\deflabel{sort}
\end{definition}

\begin{definition}[\defn{$\ple$-Placement of Large Elements:}] The input is the \emph{sorted} set of small elements $\setShort = \{s_{1}, s_{2}, \ldots, s_{\numShort}\}$, and
the \emph{unsorted} set of large elements $\setLong = \{\ell_{1}, \ell_{2}, \ldots, \ell_{L/w}\}$. In the output, elements in $\setShort$ are sorted, and elements in $L$ are sorted according to which stripe they belong to, but arbitrarily ordered within their stripe. 
\end{definition}

The following theorem relates the complexities of the sorting and the batched predecessor problem.

\begin{theorem}[\textbf{Sorting complexity}]
\thmlabel{sortingcomplexity}
Denote by $\plesl$ the complexity of the $\ple$ problem. Then the I/O complexity of Two-Sized Sorting $\sortN$ is
 
$$\Theta\left(\shortbound + \plesl+  \left(\longbound + \frac{L}{B} \right)\right).$$
\end{theorem}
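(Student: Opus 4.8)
The plan is to prove a matching upper and lower bound, treating the three summands of the claimed complexity essentially independently. Write the three terms as $\Sigma_S := \shortbound$ (sorting the small elements), $\Sigma_P := \plesl$ (placing the large elements into their stripes), and $\Sigma_L := \longbound + \frac{L}{B}$ (sorting within each stripe plus one scan of the large volume). The high-level idea is the same decomposition used in the RAM warmup (\thmref{RAM}): one can sort $N$ by (i) sorting $S$, (ii) solving $\ple$ to discover the $k$ stripes, and (iii) sorting each stripe $\setLong_i$ internally; conversely, a sorting instance can be specialized to each of these subproblems in turn.

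\textbf{Upper bound.} I would give an algorithm that realizes the three phases. Phase (i): sort the $S$ unit-size elements with a standard external-memory mergesort, costing $O(\Sigma_S)$. Phase (ii): run the $\ple$ algorithm on the sorted $S$ and the (still unsorted) large elements; by definition this costs $O(\plesl)$ and produces the partition of $L$ into stripes $\setLong_1,\dots,\setLong_k$, each stored contiguously but unsorted internally. Phase (iii): sort each stripe $\setLong_i$ separately. Since each large element has size $w$ with $B \le w \le M/2$, a merge pass on stripe $i$ moves $\volLong_i/B$ blocks and the fan-in is $M/w$ (we can hold $M/w$ large elements in RAM), so sorting stripe $i$ costs $O\!\left(\frac{\volLong_i}{B}\log_{M/w}\frac{\volLong_i}{w}\right)$; summing over $i$ gives $O(\longbound)$, and the extra $O(L/B)$ absorbs the final contiguous write-out of all large elements in sorted position. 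Adding the three phases and observing that we may freely interleave the small elements into the gaps between stripes within one more scan (already charged to $\Sigma_S + L/B$) yields the upper bound.

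\textbf{Lower bound.} Here I would exhibit, for each of the three terms, a restricted instance of $\sortN$ that already forces that term, and then take the maximum (which in $\Omega(\cdot)$-notation equals the sum, since there are only three instances). For $\Sigma_S$: give all large elements for free, already sorted and known to lie entirely after all small elements; the algorithm must still sort $S$, and the classical Aggarwal--Vitter comparison-sorting lower bound in external memory gives $\Omega(\Sigma_S)$. For $\Sigma_P$: give $S$ already sorted for free; what remains is exactly the $\ple$ problem, whose complexity is $\plesl$ by definition, so we inherit $\Omega(\plesl)$. For $\Sigma_L$: give $S$ sorted for free and additionally reveal the stripe partition of $L$ for free (i.e., which large elements fall in which of the $k$ intervals between consecutive small elements), telling the algorithm it must only sort within each stripe; since the stripes occupy disjoint key-ranges, the problem decomposes into $k$ independent sorting instances on $\volLong_i/w$ size-$w$ elements, each requiring $\Omega\!\left(\frac{\volLong_i}{B}\log_{M/w}\frac{\volLong_i}{w}\right)$ I/Os, and additionally any algorithm must read every large element at least once, contributing $\Omega(L/B)$; together this is $\Omega(\Sigma_L)$.

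\textbf{Main obstacle.} The routine parts are the $\Sigma_S$ and $\Sigma_P$ arguments, which are immediate once the right free-information instance is set up. The delicate point is the per-stripe sorting bound in both directions: on the upper-bound side, confirming that the correct fan-in is $M/w$ (not $M/B$) and that the logarithm base and argument come out as $\log_{M/w}(\volLong_i/w)$ rather than $\log_{M/w}(\volLong_i/B)$, i.e. counting in units of large elements; and on the lower-bound side, verifying that the external-memory comparison/decision-tree lower bound of Aggarwal--Vitter (as generalized by Arge--Knudsen--Larsen) applies verbatim to size-$w$ indivisible keys with the parameters $(M/w, B)$ playing the roles of $(M/B, B)$, and that summing the $k$ independent lower bounds is legitimate because the stripes are information-theoretically independent. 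I would also need to be slightly careful that the $+\,L/B$ terms on both sides are genuinely additive and not already dominated, but this is a minor check. None of these require new techniques beyond what the paper develops for $\ple$ itself; the heart of the theorem is really the reduction to $\ple$, and the other two phases are standard external-memory sorting with the block-size parameter rescaled by the key length.
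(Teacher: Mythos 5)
Your proposal is correct and follows essentially the same route as the paper: the three-phase algorithm (external-memory mergesort on $S$, then $\ple$, then per-stripe mergesort on each $\setLong_i$) is exactly the paper's upper bound, and the lower bound likewise proceeds by specializing $\sortN$ to three easier subinstances (sort $S$ alone, $\ple$, per-stripe sort) and taking the maximum, which over a constant number of instances equals their sum in $\Omega$-notation. The "delicate point" you flag about the fan-in being $M/w$ and the decision-tree bound scaling to size-$w$ indivisible keys is precisely what the paper packages as its generalized Aggarwal--Vitter lemma (\lemref{genlemma}), so you have correctly identified where the only nonroutine verification lives.
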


We will use this section to prove the first and the third term of the \thmref{sortingcomplexity}. It is easy to see that $\plesl$ is an instance of $\sortN$ that starts with more information (sorted small elements), and requires less (just the contents of the stripes, unsorted).

\begin{lemma}
\lemlabel{sortshort}
$\sortN = \Omega \left( \frac{\volShort}{B} \log_{M/B} \frac{\volShort}{B}\right)$. 
\end{lemma}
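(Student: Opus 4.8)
The plan is to reduce this to the classical comparison-based sorting lower bound of Aggarwal and Vitter in the DAM model, applied only to the small elements. The key point is that $\sortN$ on the full instance $N = S \cup L$ is at least as hard as sorting just the $\volShort$ unit-size small elements: an instance in which the final sorted order places every large element after every small element (i.e., $k = 1$ and the single stripe sits at the top) still requires the algorithm to completely sort $S$, and the output must list the small elements contiguously and in order in external memory. So any algorithm for $\sortN$ yields an algorithm for sorting $\volShort$ atomic keys in the DAM model with the same memory $M$ and block size $B$.

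First I would make precise that we work in the comparison (or algebraic decision tree) variant of the DAM model, as in Arge--Knudsen--Larsen~\cite{ArgeKnLa93}, so that an information-theoretic counting argument is legitimate: each I/O that brings a block of $B$ small elements into memory, together with the up to $M$ elements already resident, can in the best case resolve the relative order of those $B$ newly arrived elements among the $M$ resident ones, contributing at most $\log \binom{M}{B} = O(B \log (M/B))$ bits (and rearranging/writing out a block contributes nothing new in the comparison sense). Since there are $\log(\volShort!) = \Theta(\volShort \log \volShort)$ bits of information to acquire to identify the permutation of the small elements, the number of I/Os is $\Omega\!\left(\frac{\volShort \log \volShort}{B \log(M/B)}\right) = \Omega\!\left(\frac{\volShort}{B}\log_{M/B}\volShort\right)$. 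One then observes $\log_{M/B}\volShort = \Theta(\log_{M/B}(\volShort/B))$ in the standard parameter regime (where $\volShort/B \le (M/B)^{O(1)}$ is false, i.e. the interesting case; the tall-cache / small-input corner cases are handled exactly as in the original AV argument), giving the stated bound $\Omega\!\left(\frac{\volShort}{B}\log_{M/B}\frac{\volShort}{B}\right)$, which is $\shortbound$.

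The cleanest way to present this is simply to cite the Aggarwal--Vitter sorting lower bound as a black box and spend the proof on the reduction: exhibit the worst-case interleaving (all smalls below all larges), note that the output specification of $\sortN$ forces the sorted contiguous arrangement of $S$ to appear in external memory, and conclude that $C(\sortN) \ge C(\mbox{sort }\volShort\mbox{ atomic keys}) = \Omega(\shortbound)$. One subtlety worth a sentence: the presence of the large elements cannot help, because in this interleaving no comparison between a large and a small element, nor among large elements, conveys any information about the order within $S$ — formally, an adversary can answer all such comparisons consistently with "every large exceeds every small" regardless of the internal order of $S$, so the decision tree restricted to small-small comparisons must still have $\volShort!$ leaves.

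The main obstacle is not conceptual but a matter of model hygiene: we must ensure the lower bound is stated against algorithms that are allowed arbitrary computation in RAM (so the bound is genuinely an I/O lower bound, not a comparison-count bound), which is precisely what the external algebraic decision tree model of~\cite{ArgeKnLa93} provides, and we must make sure the accounting "each input I/O yields $\le \log\binom{M}{B}$ new bits" is applied to the small elements only and is not corrupted by the interleaved large-element traffic — handled by the adversary argument above. Beyond that, the $\log_{M/B}\volShort$ versus $\log_{M/B}(\volShort/B)$ normalization and the degenerate regimes ($\volShort \le M$, or $M/B = O(1)$) are routine and identical to the original AV treatment, so I would dispatch them with a single pointer to~\cite{AggarwalVi88,ArgeKnLa93} rather than reproving them.
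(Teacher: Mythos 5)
Your proposal is correct and takes essentially the same information-theoretic approach as the paper, which likewise counts the $\volShort!/(B!)^{\volShort/B}$ permutations the algorithm must distinguish and divides by the per-I/O fan-out bound $\log\binom{M}{B}=\Theta(B\log(M/B))$. The paper performs this count directly on the general instance (bounding the total permutation count $p$ from below and simply discarding the large-element factors), rather than via your reduction to a $k=1$ adversarial instance with a black-box citation of Aggarwal--Vitter, but the underlying argument and the resulting bound are identical.
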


\begin{proof}
We denote by $p$ the total number of permutations that any algorithm for Two-Sized Atomic-Key Sorting must distinguish between in order to  sort. We bound $p$ in terms of the number of large-element stripes $k$:
$$p \ge \frac{\volShort !}{(B!)^{\volShort/B}} {\volShort - 1\choose k} \left(\frac{L}{w}!\right) {L/w-1 \choose k-1}.$$ 

The four factors that comprise the right side include (1) sorting $S$ (after
sorting within the small blocks, the total number of permutations goes down by a
factor of $(B!)^{S/B}$), (2) choosing the $k$ locations for stripes within
$S$, (3) sorting $L$ and (4) forming $k$ large-element stripes
by choosing $k-1$ delimiters in the sorted $L$.  


We can assume that the elements in memory are sorted at all times, because
maintaining this order requires no additional I/Os. 
The number of remaining permutations goes down by at most ${M \choose B}$ after one memory transfer, thus a lower bound on the number of I/Os to sort is
$\mbox{\sc sort-lower}({N}) = \Omega\left(\log p/{\log {M\choose B}}\right)$. 
Using that $p \ge \frac{\volShort !}{(B!)^{\volShort/B}}$, and that 
$\log {M \choose B}=\Theta(B \log (M/B))$, 
we get
$$\mbox{\sc sort-lower}({N}) = \Omega \left( \frac{S \log S - S \log B}{B \log (M/B)} \right) = \Omega \left( \frac{\volShort}{B} \log_{M/B} \frac{\volShort}{B}\right),$$ 
concluding the proof.

\qed
\end{proof}

The third term concerns sorting the stripes $L_{i}$ of large elements. Although this is same-size sorting, we need a slight generalization of the classical Aggarwal and Vitter \cite{AggarwalVi88} result for records of size $w>1$. 

\begin{lemma}[Aggarwal and Vitter]
\lemlabel{genlemma}
Consider an external-memory algorithm $A$ that sorts the total volume $V$ of $V/w$ elements, each of size $w$.
\begin{enumerate}
\item If $1 \leq w < B$, ${A}$ requires $\Omega \left( \frac{V}{B} \log_{M/B} \frac{V}{B} \right)$ block transfers.
\item If $w \geq B$, ${A}$ requires $\Omega \left( \frac{V}{B} \log_{M/w} \frac{V}{w}\right)$ block transfers.
\end{enumerate}
\end{lemma}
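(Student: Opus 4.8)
The plan is to redo the classical Aggarwal--Vitter counting lower bound for sorting, carrying the record length $w$ through every step, and to split the argument into the two stated regimes $1 \le w < B$ and $w \ge B$. As in the proof of \lemref{sortshort} we keep the records resident in internal memory in sorted relative order at all times (free of I/O), and we track the number $p$ of linear orderings of the $V/w$ records that are still consistent with the comparisons performed so far. Initially $p = (V/w)!$ and at termination $p = 1$, so it suffices to bound, for each I/O, the factor by which $p$ can shrink; a lower bound on the number of I/Os then follows by taking logarithms.

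\textbf{Case 1 ($1 \le w < B$).} A disk block now holds $B/w$ records and internal memory holds $M/w$ records, so (assuming records are block-aligned, which costs only a constant factor) this is exactly the Aggarwal--Vitter setting under the substitutions $N \mapsto V/w$, $B \mapsto B/w$, $M \mapsto M/w$. Reading a block the algorithm itself wrote merges $B/w$ records of known relative order among the $\le M/w$ records in memory and shrinks $p$ by at most $\binom{M/w}{B/w}$; the first read of one of the $V/B$ input blocks additionally resolves the $(B/w)!$ a priori orderings of the records inside it. Hence after $T$ I/Os, $(V/w)! \le \binom{M/w}{B/w}^{T}\,\big((B/w)!\big)^{V/B}$, and plugging in $\log\binom{M/w}{B/w} = \Theta\big(\tfrac{B}{w}\log\tfrac{M}{B}\big)$ together with Stirling's formula gives $T = \Omega\big(\tfrac{V}{B}\log_{M/B}\tfrac{V}{B}\big)$, which reduces to the classical bound when $w=1$.

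\textbf{Case 2 ($B \le w \le M/2$).} Here the adaptation is genuinely different: a single I/O moves only a $B/w$-fraction of a record, and since comparisons require a record in its entirety, such a fragment carries no usable information by itself. Therefore $p$ is unchanged by every I/O except the one that \emph{completes} a record in memory --- bringing in the last of its $w/B$ blocks --- and at such a step the algorithm can only place the new record among the $\le M/w$ residents, so $p$ shrinks by at most $M/w$ (this is where we use $w \le M/2$, which guarantees $M/w \ge 2$ so that the base exceeds $1$). Bringing a record into memory costs $w/B$ I/Os, so at most $TB/w$ of the $T$ I/Os can be completing I/Os; consequently $(V/w)! \le (M/w)^{TB/w}$, and taking logarithms with Stirling yields $T = \Omega\big(\tfrac{V}{B}\log_{M/w}\tfrac{V}{w}\big)$. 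Equivalently, Case 2 is a reduction to sorting $V/w$ atomic elements in the standard DAM with memory $M/w$ and block size $1$, where each logical element-transfer is charged $w/B$ physical I/Os.

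\textbf{Main obstacle.} Case 1 is routine once one notices it is a rescaling of Aggarwal--Vitter. All the care is in Case 2, and specifically in the claim that a comparison-based algorithm gains information only one full record at a time, each such gain amortized over $w/B$ I/Os. One must rule out ``re-completing'' a record cheaply by evicting and re-reading a single one of its blocks while keeping the other $w/B-1$ blocks resident: under the natural long-record model (keys are moved into memory as units and are needed in their entirety) this cannot happen, but if sub-record transfers are allowed one has to observe that hoarding partial records wastes internal memory and so shrinks the effective base $M/w$ of the logarithm by a compensating amount. The hypothesis $w \le M/2$ must be invoked exactly at the point where the base of the logarithm is required to be bounded away from $1$.
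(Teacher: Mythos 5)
Your proof takes essentially the same approach as the paper: Case~1 is the same block-level rescaling of Aggarwal--Vitter with parameters $N \mapsto V/w$, $B \mapsto B/w$, $M \mapsto M/w$, and Case~2 is the identical branching-factor argument that bounds the fan-out of each full-record input by $M/w$ and charges $w/B$ I/Os per such input. Your remarks on the ``re-completion'' subtlety and on where $w \le M/2$ is needed are a useful clarification of points the paper leaves implicit, but do not change the substance of the argument.
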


\noindent\textbf{Remark:} The third term in the bound in \thmref{sortingcomplexity} is derived by substituting $V=L_{i}$ in the above lemma, and adding all the lower bounds (since sorting stripe $i$ is independent of sorting stripe $j$).

\begin{proof}
In both cases, we count the total number of possible output permutations and the maximum permutations achievable during a single I/O, or during the input of one element ($w/B$ I/Os), whichever is larger.
\begin{enumerate}
\item $1 \leq w < B$: Assume that $w$ divides $B$. In a linear scan, we can internally sort every block, which restricts the possible output permutations to 
\[  \frac{(V/w)!} { ((B/w)!)^{V/B}  }. \]

When a block is input, there are at most $(M-B)/w$ sorted elements in memory. The incoming block contains $B/w$ sorted elements, so the number of remaining output permutations reduces by at most a factor of  
\[{ (M-B)/w + B/w \choose B/w} = { M/w \choose B/w}.\]

Thus we get that the algorithm requires at least 

\[ \Omega \left(\frac{\log \left( (V/w)! / ((B/w)!)^{V/B} \right)}  {\log { M/w \choose B/w}} \right) \] block transfers. Using the same bounds for ${n \choose k}$ as in \lemref{sortshort}, we get the desired bound. 

\item  $w \geq B$: Assume for simplicity that $w$ is an integer multiple of $B$. In this case, there are $(V/w)!$ possible output permutations. One can scan every chunk of size $M$, but this does not change the bound we present asymptotically ( $\log_{M/w} V/w$ changes to $\log_{M/w} V/M$).

When an element is input, there are at most $(M-w)/w = (M/w)-1$ (sorted) elements in memory. The input of an element costs $w/B$ I/Os, and this element can go into any one of $M/w$ positions between the elements in memory. Hence the maximum branching factor for one element input is $M/w$. 

This implies that the number of element inputs is 
$$\Omega\left(\frac{(V/w)\log (V/w)}{\log (M/w)}\right) = \Omega\left(\frac{V}{w} \log_{M/w} \frac{V}{w}\right),$$
and multiplying by the cost of every large element input ($w/B$) gives us the claimed bound. 
\end{enumerate}
\qed
\end{proof}

As in the RAM setting, since we have three subproblems, their maximum complexity, and hence the complexity of their sum, is a lower bound on $\sortN$. We have thus reduced the sorting problem to the batched predecessor problem, which will occupy the rest of this article.

\vspace{-2mm}\section{Main Challenges in the Batched Predecessor Problem}

For $\plesl$, one can see that there is not much point
comparing large records to each other; one would rather compare a large record
to more small records than one large record. We need the notion of a
\textbf{fan-out}, which measures the efficiency of an I/O. In $\plesl$, large elements are the ones trying to find their locations amongst
the small elements. A large element is called active during an I/O if it is
either in memory or in the block transferred during this I/O. Before an I/O,
any active large element has a set of locations where it might lie, and this
set gets reduced by a certain factor (possibly $1$) after this I/O. The fan-out
of an I/O is defined to be the product of all such factors for all large
elements active during this I/O.  

We now describe the three main challenges in
extending the RAM solution to external memory.

\noindent\textbf{1.Non-uniformity:} In the unit-sized setting, the transfer of a block to main memory can decrease the number of permutations to be checked by a factor of at most\footnote{The proof of the lower bound for sorting $N$ unit-sized keys in \cite{Erickson05} proceeds in the following fashion: assuming that all blocks are sorted (using a linear scan costing $N/B$ I/Os), there are $N!/(B!)^{(N/B)}$ permutations required to achieve, and the transfer of a block of $B$ sorted elements into the main memory containing $M-B$ sorted elements can at most an ${M \choose B}$ fraction of these permutations (the ``fan-out,'' since this is the degree of the node in the decision tree). Standard algebra gives a lower bound of $\frac{N}{B} \log_{M/B} \frac{N}{B}$.} $B! {M \choose B}$. In our setting, the number of comparisons performed by an I/O varies depending on whether the block transfer carries large records or small records, and what the contents of RAM are at the time of the I/O. 

\begin{itemize}
    \item The transfer of a large element into main memory full of large elements gives only ${M/w \choose 1}$ per $w/B$ I/OS as a large-element transfer costs $w/B$.
    \item The transfer of $B$ small records into main memory filled with small records gives ${M \choose B}$. 
    \item The transfer of a large element into a memory full of $M$ small elements gives a fan-out of $M+1$. 
    \item While the above three cases are tight, the main issue is in getting \textit{an upper bound on how much a small block I/O can achieve}. The main memory can hold $p=(M-B)/w$ large elements, and an incoming small
block has $B$ small elements. Thus naively the maximum fan-out can be upper
bounded by $B^p$, which is not tight. Our main aim is to get a better upper
bound on this fan-out.
\end{itemize}
Both our upper and
lower bounds are a minimum of two terms, where one
dominates the other depending on how large the large elements are (whether they
can be brought into memory multiple times or just once).

\noindent\textbf{2. Requiring output-sensitive lower bound limits adversarial arguments:} Lower bounds on the
unit-sized batched predecessor problem
in external memory were recently obtained in \cite{BenderFaGo14}. The adversary
strategy in the comparison model was quite simple since the adversary had the
freedom to place the elements being searched for at any place in the sorted
set. By maintaining the invariant that all the elements being searched for
currently in main memory must have disjoint search spaces, it was able to
guarantee a fan-out of at most $2^B$, and in some cases, a fan-out of at most
$B$. 

In our setting, a more complicated adversarial analysis is required that forms exactly $k$ stripes at the end. Using this, we can argue a fan-out of at most $2^B$ on \textit{most} small block I/Os.

\noindent\textbf{3. Have to take preprocessing into account:} $\ple$ is not a
traditional searching problem, but a subproblem of sorting. We cannot be
concerned only with the query time, but also the preprocessing time: what is
the minimum amount of preprocessing needed to achieve a given query time? Even
for the classical single-element-search (for which the well known $B$-tree
provides optimal query time), this is, to the best of our knowledge, not
known. We have the following observation  which might be surprising at a first glance.

\begin{observation}\label{pablo}
Given a sorted array of $N$ keys on disk, there exists an algorithm that uses extra space of $O(N^{(1-1/\log B)}/B) (= o(N/B))$ blocks, and answers single-element search query in (optimal) $O(\log_{B} N)$ I/Os. If the query time is required to be at most $c \log_{B} N$, then any algorithm needs to preprocess $\Omega(N^{(1-(2c/\log B))}/B)$ extra blocks.
\end{observation}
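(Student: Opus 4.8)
The two parts call for separate treatments; the lower bound is the substantive one and is where I expect the difficulty to lie.

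\medskip
\noindent\emph{Upper bound.} The plan is to store a \emph{truncated} B-tree. Put $r_{0}:=B\cdot N^{1/\log B}$, so that a contiguous run of $r_{0}$ keys occupies $r_{0}/B=N^{1/\log B}=2^{\log_{B}N}$ blocks. Cut the sorted array into $N/r_{0}$ consecutive \emph{leaf ranges} of $r_{0}$ keys, and build a fan-out-$\Theta(B)$ B-tree over the $N/r_{0}$ range separators; this index occupies $O\!\left(N/(r_{0}B)\right)=O\!\left(N^{1-1/\log B}/B\right)$ blocks, the claimed extra space. A query descends the index to its leaf range in $\log_{B}(N/r_{0})\le\log_{B}N$ I/Os and then binary-searches \emph{inside the sorted array itself} over the $r_{0}/B=2^{\log_{B}N}$ blocks of that range in another $\log_{B}N$ I/Os, for a total of $O(\log_{B}N)$, which is optimal.

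\medskip
\noindent\emph{Lower bound.} I would view any algorithm that preprocesses the array into $P$ extra blocks (hence $E:=PB$ extra keys, which in the comparison model we may take to be copies of array keys) and answers every query in $\le q:=c\log_{B}N$ I/Os as a decision tree $\mathcal{D}$: the first probe is fixed, reading a $\le B$-key block splits the query value among $\le B+1$ consecutive intervals, so $\mathcal{D}$ has out-degree $\le B+1$ and depth $\le q$. The plan is to exploit three facts. (i) The root probe partitions the possible answers into $\le B+1$ consecutive sub-ranges, and the subtree serving a sub-range $\rho$ is itself a depth-$(q-1)$ structure for $\rho$. (ii) Across distinct sub-ranges the array blocks that get probed are \emph{disjoint}, while auxiliary blocks may be \emph{shared}. (iii) A consecutive range of $V$ keys can be searched using \emph{no} extra space in $q'$ I/Os only when $V=O(B\,2^{q'})$ --- because the array order is fixed, array probes make only binary-search-type progress on a consecutive range --- and this is the base case of the recursion.

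\medskip
\noindent
Letting $g(N,q)$ be the least number of extra keys needed, facts (i)--(iii) would yield the recurrence $g(N,q)\ \ge\ B+(B+1)\,g\!\left(\tfrac{N}{B+1},\,q-1\right)$ with $g(V,q')=0$ once $V=O(B\,2^{q'})$; unrolling it until the base case is met --- after $\Theta\!\left(\log_{B}\!\left(N/(B\,2^{q})\right)\right)$ levels --- solves, for $q=c\log_{B}N$, to a lower bound of the claimed form $\Omega\!\left(N^{1-\Theta(1)/\log B}/B\right)$ extra blocks, the precise constant in the exponent's numerator (and the gap with the construction's $1-1/\log B$) being exactly what is left open. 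The step I expect to be the real obstacle is making this recurrence legitimate despite the \emph{shared} auxiliary blocks of (ii), which a naive recursion would double-count: the remedy is to charge an auxiliary block, at each node where it is probed, only for the keys of that block that lie in the current sub-range; since the sub-ranges at a node are disjoint, these charges are disjoint and sum to at most the block's size, so $\sum_{i}g(\rho_{i},q-1)$ genuinely lower-bounds the extra keys used in the node's subtrees. Two further points need dispatching: that array-block probes cannot substitute for auxiliary ones (halving $N$ while decrementing $q$ leaves ``$N$ versus $B2^{q}$'' unchanged, so array probes make no progress toward the base case, forcing a certain rate of auxiliary probes), and --- by a short convexity estimate --- that a balanced $(B{+}1)$-way split is the algorithm's cheapest first move.
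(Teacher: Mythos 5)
Your upper bound is, up to the choice of where to cut, the same construction as the paper's: the paper stores every $\beta$th key where $\beta = N^{1/\log B}$ (so $N^{1-1/\log B}/B$ extra blocks) and builds a $B$-tree on those, then finishes with a binary search over the remaining $\beta$ keys in $\log_2\beta = \log_B N$ I/Os; you store every $(B\beta)$th separator, which is a factor $B$ coarser and hence also within the claimed space. (Minor slip: $N/(r_0 B) = N^{1-1/\log B}/B^2$, not $/B$, but that only helps you.) The one place where you are slightly too generous to the algorithm is in treating an array-block probe as ``binary-search-type progress'': a block of $B$ contiguous array keys can locate $x$ exactly if $x$ falls inside it, so the honest per-probe reduction is unbounded unless you first relax the target. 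The paper handles this by stopping the game once $|S| \le 2B$, after which one array probe shrinks $S$ by a factor of at most $2S/(S-B) \le 4$; that factor of $4$ rather than $2$ is precisely where the constant $2c$ in the exponent comes from, and is why your sketch, as you concede, does not recover it.

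For the lower bound proper you take a genuinely different route from the paper, and it is where the real gaps are. You unroll the decision tree level by level and try to close a recurrence $g(N,q)\ge B+(B+1)\,g(N/(B+1),q-1)$, then argue separately that array probes cannot substitute for auxiliary ones and that a balanced split is optimal, while flagging the shared-auxiliary-block issue as the main open obstacle. The paper avoids the recursion altogether with one global observation: fix the (non-adaptive) auxiliary structure $\mathcal{B}'$ and let $K$ be (essentially) the number of distinct blocks/pivots it contains; then \emph{no matter how many auxiliary probes the algorithm makes over the whole query}, the adversary can keep $x$ inside the largest gap left by those pivots, so the total fan-out contributed by all auxiliary probes along any root-to-leaf path is $O(K)$. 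Multiplying this by the $4^{q}$ bound for the array probes and comparing to the required $N/(2B)$ leaves immediately gives $K=\Omega(N^{1-2c/\log B}/B)$, with no recurrence, no double-counting issue, and no need for a convexity argument about the first split. Your charging remedy for shared blocks is aimed at the right problem but is only sketched, your recurrence as written always charges $B$ at the root even when the root is an array probe, and your unrolling (with the factor-$2$ array reduction) lands at exponent $1-c/\log B$ rather than $1-2c/\log B$; so as it stands the lower-bound half is a plausible plan rather than a proof, whereas the paper's global-gap argument closes it in a few lines.
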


\begin{proof}
We first prove the lower bound. We can assume that the element $x$ being searched for is always inside the memory at all times. The $N$ keys on disk are stored in $N/B$ blocks; call the set of these blocks $\mathcal{B}$. Let $S$ denote the current \textit{search space} of $x$: this is the set of locations in $N$ that $x$ can lie in, given all the information achieved by the algorithm until now. Any algorithm that solves this problem is described by a decision tree which
has nodes corresponding to I/Os of either a preprocessed block or a block from $\mathcal{B}$. The decision tree has at least $N/B$ leaves.

We first relax that the algorithm only locate $x$ to within a space of $2B$, i.e., once $ \vert S \vert \leq 2B$, we will give the algorithm the exact position for free. With this, the input of a block from $\mathcal{B}$ can reduce $S$ by a factor of at most $4$ (actually, this factor is $2S/(S-B)$, which is very close to $2$ when $S$ is large, and equals $4$ when $S=2B$). Since the query time cannot exceed $c\log_{B} N$, $S$ can reduce by a factor of at most $4^{c \log_{B}N} = N^{1-(2c/\log B)}$. This still leaves a factor of $N^{(1-(2c/\log B))}/B$ to account for.

Call the set of all extra blocks preprocessed by the algorithm $\mathcal{B}^{'}$. Let $K$ be the set of all elements in blocks in $\mathcal{B}^{'}$ such that no two elements in $K$ belong to the same block in $\mathcal{B}^{'}$. The following holds:

\begin{enumerate}
\item To preprocess $\mathcal{B}^{'}$, the algorithm required $\Omega(K)$ I/Os.

\item The maximum fan-out that using preprocessed blocks can achieve is $O(K)$.
\end{enumerate}

Proof of 1: Since there are $K$ elements from different blocks, each of these
blocks needed to be inputted at least once at some point of preprocessing, hence requiring at least $K$ memory transfers.

Proof of 2: Regardless of
the choice of K elements within $S$, there will be at least one gap
that is of size at least $S/K$. Hence, the maximum
factor by which the search space can be decreased is $K$.

Since the decision tree still must have enough nodes to guarantee the remaining fan-out of $N^{1-(2c/\log B)}/B$, the above two observations imply that $K = \Omega(N^{1-(2c/\log B)}/B)$, thus finishing the proof.

\noindent\textbf{Upper Bound}: Let $\beta = N^{1/\log B}$. In a linear scan, the algorithm can write out every $\beta$th element from $N$, and store them in contiguous blocks. There are $N^{1-1/\log B}$ such elements, and this requires and extra space of $N^{(1-1/\log B)}/B$ blocks. The algorithm then builds a $B$-tree on this set of elements. 

The search proceeds by first going through the $B$-tree, until the search space of $x$ is reduced to a set of size $\beta$. On this set, the algorithm performs a simple binary search. The total runtime is bounded by $\log_{B} N^{(1-1/log B)} + \log_{2} \beta \leq 3 \log_{B} N$ I/Os.
\qed
\end{proof}

\vspace{-3mm}\section{Complexity of the Batched Predecessor problem: Lower Bounds}
In this section, we prove the lower bounds for the $\plesl$ problem in the following theorem:

\begin{theorem}[\textbf{PLE Lower Bound}]
\thmlabel{plelower}
 $$\plesl = \Omega\left( \min\left\{ \frac{kw}{B} \log_{M} S+ \bfslb, \frac{k}{B}\log S +\frac{\volLong}{wB}\log k  + \frac{L}{B}\right\}\right).$$
\end{theorem}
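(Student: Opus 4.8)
The plan is to derive two separate lower bounds on $\plesl$—one that is strong when $w$ is small (large elements can be brought into memory many times) and one that is strong when $w$ is large (each large element is essentially brought in once)—and then observe that $\plesl$ is at least the minimum of the two, since any algorithm must pay the larger of whatever two quantities the adversary forces, and the adversary picks the regime. Within each regime, the argument is an information-theoretic counting bound in the decision-tree model: a node of the decision tree is an I/O, and I count (a) the number of leaves, i.e. the number of output configurations any correct algorithm must distinguish, and (b) the maximum \emph{fan-out} of a single I/O, i.e. the maximum number of children of a decision-tree node, which we bounded case-by-case in Section~4. Taking $\log(\text{leaves})/\log(\text{max fan-out})$ gives the I/O lower bound, with care taken to separate the contributions of small-block I/Os, large-element I/Os, and the raw cost $L/B$ of reading the large elements at all.

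First I would pin down the leaf count. An algorithm for $\plesl$ receives the small elements already sorted and must produce the assignment of the $L/w$ large elements to the $k$ stripes, together with the placement of those $k$ stripes among the $S$ small elements. So the number of distinguishable outputs is at least $\binom{S-1}{k-1}$ (or $\binom{S}{k}$, up to the usual constant in the exponent) for the stripe placement, times the number of surjections of $L/w$ large elements onto $k$ labeled stripes, which is $\Theta(k^{L/w})$ for the range of $k$ we care about (using the Stirling-number lower bound $k^{n-k}$ exactly as in the RAM proof of Theorem~\ref{RAM}). Thus $\log(\text{leaves}) = \Omega\!\left(k\log(S/k) + \frac{L}{w}\log k\right)$, and in the regime $k \le S^{1-\epsilon}$ this is $\Omega\!\left(k\log S + \frac{L}{w}\log k\right)$; the $k > S^{1-\epsilon}$ corner is handled by the same ``representatives'' trick as in Theorem~\ref{RAM}, giving a clean $\Omega\!\left(\frac{L}{w}\log S\right)$-type bound there. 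For the DAM-specific lower bounds, I also need the adversary to actually force $k$ stripes at the end (this is the subtlety flagged as Challenge~2): I would describe an adversary that answers comparisons so as to keep active large elements in memory with disjoint candidate-stripe-sets, so that each small-block I/O has fan-out at most $2^B$ (and a single large-element I/O, costing $w/B$, has fan-out at most $M/w+1$).

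Now the two regimes. For the small-$w$ bound $\frac{kw}{B}\log_M S + \frac{L}{B}\log_M k$: I charge the ``$k$ stripes must find their positions among $S$ small elements'' work to large-element I/Os whose fan-out is at most $M$ (a large element landing in a memory full of small elements), so learning $\log S$ bits about one stripe's position costs $\Omega\!\left(\frac{w}{B}\log_M S\right)$ I/Os, and there are $k$ stripes, giving the first summand; the ``which stripe does each large element go to'' work, $\frac{L}{w}\log k$ bits, is charged against I/Os of fan-out $\le M$ as well but now amortized over all $L/w$ large elements, each input costing $w/B$, giving $\Omega\!\left(\frac{L}{wB}\cdot\frac{w}{\log M}\log k\right) = \Omega\!\left(\frac{L}{B}\log_M k\right)$, which is the macro $\bfslb$. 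For the large-$w$ bound $\frac{k}{B}\log S + \frac{L}{wB}\log k + \frac{L}{B}$: here the fan-out of a small-block I/O is $\le 2^B$, so learning $\log S$ bits about a stripe position costs $\Omega(1/B \cdot \log S)$ small-block I/Os per stripe, hence $\Omega\!\left(\frac{k}{B}\log S\right)$; the stripe-membership information similarly costs $\Omega\!\left(\frac{L}{wB}\log k\right)$; and finally every large element must be read into memory at least once, which is $\Omega(L/B)$ I/Os unconditionally, giving the last summand. Taking the minimum over the two regimes—which is legitimate because the bound must hold against the worst adversary, and the two constructions above are different instances—yields the stated theorem.

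The main obstacle I expect is Challenge~2: rigorously exhibiting an adversary that (i) is consistent with \emph{some} input producing exactly $k$ stripes, (ii) still leaves $\Omega(\text{leaves})$ outputs undetermined at every internal node, and (iii) caps the small-block fan-out at $2^B$ on all but a negligible set of I/Os. The naive disjoint-search-space invariant from \cite{BenderFaGo14} does not directly preserve the ``exactly $k$ stripes'' constraint, so the bookkeeping—how many large elements the adversary has already committed to which stripes, and how it maintains enough freedom—needs the ``most small block I/Os have fan-out $2^B$'' qualifier, and that slack is what forces the bound to be a \emph{minimum} of two terms rather than a sum, and is also why the bounds are tight only in most cases.
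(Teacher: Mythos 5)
Your high-level plan (disjoint-search-interval adversary, fan-out $2^B$ per small-block I/O and $O(M)$ per large-element I/O, a two-regime argument whose minimum gives the stated bound) is the right skeleton and matches the paper's. But you yourself flag the actual crux as unresolved—how the adversary preserves consistency with exactly $k$ stripes while still withholding enough permutations—and the paper's proof is precisely a resolution of that point, so the proposal as written has a genuine gap rather than a complete argument. The paper does not run the counting directly on PLE; it first splits PLE into three sub-instances \sk\ (one large element per stripe, placement among $S$), \klw\ (one small element per stripe, membership among $k$), and \kk\ (perfect interleaving, $k$ small and $k$ large alternating), proves each as a separate lower bound via \lemref{adversarylemma}, and then adds (a sum of constantly many lower bounds equals their max up to constants). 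This decomposition is what isolates your ``exactly $k$ stripes'' difficulty into the \kk\ case alone: in \sk\ the adversary pre-commits the $i$th large element to the $i$th chunk of size $S/k$, so disjointness is automatic and no stripe bookkeeping is needed; in \klw\ there is no interleaving constraint at all; only in \kk\ must the adversary eventually land one large element per leaf.

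The missing idea is the paper's treatment of \kk: it introduces a node capacity $c^T(v)=k/2^h$ for a node $v$ at depth $h$ of the binary tree $\TT$ over the small elements, and splits the bits a large element earns into \emph{type one} bits (earned while at most $c^T(v)/4-1$ large elements have already passed through $v$) and \emph{type two} bits (the rest, given away for free once a node is near capacity). The observation that there are $\Omega(k\log k)$ type-one bits that any correct algorithm must acquire, combined with the per-I/O caps of $O(B)$ bits (small block) and $O(\log M)$ bits (large element), yields \lemref{kklemma}. Without something like this, your adversary with disjoint intervals has no mechanism to keep routing elements down while remaining consistent with a single interleaving that alternates small/large; the naive invariant lets the adversary over-commit some subtrees and strand others. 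Your charging argument (``charge stripe-position bits to large-element I/Os, amortize stripe-membership bits over $L/w$ large elements'') is also a bit loose in that bits of either kind could be learned through either I/O type; the paper sidesteps this by bounding the total bits any I/O can give and then minimizing over how the algorithm allocates its I/Os, which is where the $\min$ in the statement comes from (the crossover being at $w\approx B\log M$). So: right framework, correct regime split and fan-out bounds, but the decomposition into \sk, \klw, \kk\ and the type-one/type-two bit accounting for \kk\ are the load-bearing ideas you would still need to supply.
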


In order to prove \thmref{plelower}, first divide the $\plesl$ problem further into three subproblems. Doing this helps us develop a more intricate adversarial analysis that gives us tight lower bounds. We then develop matching upper bounds on the $\plesl$ problem. We consider the following three subproblems of $\ple$, whose complexities lower bound the complexity of $\ple$, and hence $\sortN$:

\begin{compactenum}
\item \sk: An instance with only \emph{one large element} in each large-element stripe.
\begin{compactitem}
\item Input: Set $S$ of unit-sized elements $s_{1},\ldots,s_{\volShort}$ \defn{(sorted)}, where $s_1=-\infty$ and $s_{\volShort}=\infty$, and large elements $\ell_{1},\ldots,\ell_{k}$ (volume $kw$) \defn{unsorted}. 
\item Output: For each $\ell_{i}$ output $s_{j}$ such that $s_{j} \leq \ell_{i} \leq s_{j+1}$. It is guaranteed that no other $\ell_{k}$ satisfies $s_{j} \leq \ell_{k} \leq s_{j+1}$ (one large element per stripe).
\end{compactitem}
\item \klw: An instance with only \emph{one small element} in each small-element stripe.
\begin{compactitem}
\item Input: Unit-sized elements $s_{1},\ldots,s_{k + 1}$ \defn{sorted}, where $s_1=-\infty$ and $s_{k+1}=\infty$, and large elements $\ell_{1},\ldots,\ell_{\lw}$ (volume $\lw w$) \defn{unsorted}. 
\item Output: For each $\ell_{i}$, output its predecessor and successor in $S$.
\end{compactitem}
\item \kk: An instance with only \emph{one element in each stripe}, large or small. 
\begin{compactitem}
\item Input: Unit-sized elements $s_{1},\ldots,s_{k + 1}$ \defn{sorted}, where $s_1=-\infty$ and $s_{k+1}=\infty$, and large elements $\ell_{1},\ldots,\ell_{k}$ (volume $kw$) \defn{unsorted}. 
\item Output: The entire set in the sorted order. 
\end{compactitem}
\end{compactenum}

The format of lower bounds for \sk, \klw and \kk is as follows: let $X$ be the logarithm of the total number of permutations that an algorithm needs to achieve in order to solve the problem. As is easily observed, the values of $X$ for these three subproblems are $k \log (S/k)$, $\tilde{k} \log k$, and $k \log k$, respectively. Lemma~\ref{adversarylemma} below is the most technical part of this paper, and it helps us quantify the behavior of the adversary during small-block and large-element inputs for all three subproblems.  We use this lemma to prove the lower bounds for the individual three subproblems (found in \lemref{sklemma}, \lemref{klwlemma}, and \lemref{kklemma}). Then we put the three lemmas together to obtain the expression from \thmref{plelower}.

\begin{lemma}\label{adversarylemma}
Consider any algorithm for the \sk, \klw, or the \kk problem. There exists an adversary such that:
\begin{itemize}
    \item On the input of any block of $B$ short elements, the adversary answers comparisons between all elements in main memory such that the fan-out of this I/O is at most $2^B$. In other words, the number of permutations the algorithm needs to check is reduced by a factor at most $2^B$.
    \item On the input of any large element (costing $w/B$ I/Os), the adversary answers comparisons between all elements in main memory such that the fan-out of this I/O is at most $O(M)$. In other words, the number of permutations the algorithm needs to check is reduced by a factor at most $O(M)$.
\end{itemize}
\end{lemma}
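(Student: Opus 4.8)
The plan is to exhibit one adversary that handles all three subproblems \sk, \klw, and \kk at once. Throughout the algorithm's execution the adversary maintains a non-empty family $\mathcal{F}$ of ``live'' completions: assignments of every large element to a gap between consecutive small elements that (a) are consistent with every comparison answer given so far and (b) produce exactly $k$ stripes of the required type. For each active large element $\ell$ it tracks a \emph{search space} $I(\ell)$ --- the gaps into which $\ell$ may still fall over $\mathcal{F}$ --- and it keeps two auxiliary invariants: the relative order of all elements currently in RAM is completely fixed, and the search spaces have a subproblem-specific shape, namely pairwise disjoint intervals of gaps for \sk and \kk (each stripe-region receives exactly one large element) and a single common interval of gaps shared by all unresolved large elements for \klw (where many larges may land in one gap). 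On every I/O the adversary answers the newly forced comparisons so as to retain the largest sub-family of $\mathcal{F}$; the lemma amounts to showing this loses a factor at most $2^{B}$ on a small-block input and $O(M)$ on a large-element input.

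For a large-element input --- one new large element $\ell$ brought in at cost $w/B$ I/Os --- the only genuinely new comparisons are between $\ell$ and the elements already in RAM, of which there are at most $M$ (RAM holds $M$ units, and a length-$w$ record is still one atomic element). Because those elements already sit in a fixed total order, they define at most $M+1$ slots; the adversary puts $\ell$ in the slot leaving the largest sub-family, paying a factor $\le M+1 = O(M)$, and answers all remaining large--large comparisons in the one way consistent with the fixed RAM order, which shrinks $\mathcal{F}$ no further. It then folds $\ell$ into the disjoint-interval structure (\sk, \kk) or intersects $I(\ell)$ with the common interval (\klw), restoring the invariants.

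For a small-block input the block is $B$ consecutive small elements, cutting the line into $\le B+1$ regions; block elements lying outside a given $I(\ell)$ force non-reducing comparisons, so only the $\le B$ cut-points interior to some search space matter. In the \klw case, where all unresolved larges share one interval $G$, the adversary restricts its answers to those that send \emph{all} of them into the \emph{same} one of the $\le B+1$ sub-pieces of $G$: there are then $\le B+1$ answer patterns, and by pigeonhole the best retains a $1/(B+1) \ge 2^{-B}$ fraction of $\mathcal{F}$. In the \sk and \kk cases, disjointness forces that among the active larges whose search space meets the block at most two straddle a boundary of the block's interior span; for each straddler the adversary answers ``entirely above'' or ``entirely below'', keeping $\ge$ half of its search space, and every other relevant large has $I(\ell)$ contained in the $\le B-1$ interior gaps and thus gets pinned. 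Since those pinned search spaces are disjoint sub-intervals of $B-1$ gaps their sizes sum to $\le B-1$, and $\log_2 x \le x-1$ makes the product of their shrink factors $\le 2^{B-1}$; the two straddlers add a further factor $\le 4$, for total fan-out $2^{O(B)}$. Afterwards the invariants are re-established, and since the information budget $X$ of each subproblem equals the logarithm of the full family size, $\mathcal{F}$ is never emptied.

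The main obstacle is the small-block step, precisely because of the need for the subproblem-specific structural invariant: without it, $q$ active larges in RAM could each shed a factor $B+1$ in a single I/O, giving $(B+1)^{q}$ instead of $2^{O(B)}$. The delicate sub-points are (i) showing disjointness of search spaces survives large-element inputs, where a new large must be inserted into the disjoint-interval family without forcing an existing search space to split; (ii) checking that confining the \klw adversary to ``uniform'' refinements never empties $\mathcal{F}$, which relies on \klw imposing no separation requirement between distinct large elements; and (iii) the bookkeeping turning ``disjoint sub-intervals of $B-1$ gaps'' into $2^{O(B)}$ while properly absorbing resolved larges, pre-existing RAM small elements, and the two boundary straddlers --- and, at the end, reconciling the $2^{O(B)}$ bound with the cleaner ``$2^{B}$'' stated (either by sharper bookkeeping or by noting the constant lives in the exponent).
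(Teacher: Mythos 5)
Your plan is to prove the lemma via a live family $\mathcal{F}$ and subproblem\mbox{-}specific shape invariants. For \sk and \kk (disjoint intervals) this is broadly in the spirit of the paper's argument, though the paper gets the $2^{O(B)}$ bound more directly: it shows that the adversary gives $O(\log(n_{p,i}+1))$ bits to the $i$th large element when $n_{p,i}$ incoming small elements land in its interval, and then $\sum_i n_{p,i} = B$ forces $\sum_i \log(n_{p,i}+1) \le B$. That argument needs no ``consecutive block'' assumption and no separate straddler bookkeeping. Your straddler/pinning accounting implicitly assumes the $B$ incoming small elements are contiguous in sorted order, which an algorithm is not obliged to respect once it starts writing blocks of its own choosing; and the ``further factor $\le 4$'' for a straddler that gets sent into the block's span is not justified as stated (it can then be pinned, contributing up to its full interval size). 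Both are repairable, but the paper's counting avoids them entirely.

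The genuine gap is in the \klw case. Your invariant --- a \emph{single common interval $G$ shared by all unresolved large elements} --- is not the right one, and cannot be maintained at acceptable cost. First, if all unresolved large elements are constrained to the same interval, you have effectively committed the adversary to the $k$ completions in which every large element lands in the same gap, rather than the $k^{\lw}$ completions the lower bound needs; the algorithm then finishes after only $O((\log k)/B)$ small-block I/Os, far below the $\Omega((\lw \log k)/B)$ we must prove. Second, the invariant cannot be cheaply restored on a large-element input: when a fresh $\ell$ arrives with the full line as its search space while the other unresolved elements already sit in a narrow $G$, ``intersecting $I(\ell)$ with $G$'' is a fan-out of $|\mbox{line}|/|G|$, which is unbounded and in particular not $O(M)$. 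The paper's adversary sidesteps all of this by keeping the \emph{same} invariant --- pairwise disjoint search intervals among the unresolved large elements currently in memory --- for all three subproblems, working over a fixed binary tree $\mathcal{T}$ on $\mathcal{S}$ with intervals always rounded down to subtrees. Disjointness is exactly what makes $\sum_i n_{p,i} = B$ true; it does not pre-commit to a collapsed family; and the cost of re-establishing it after a large-element input is analyzed explicitly via the three cases (shared node, ancestor in memory, descendants in memory), which is what produces the $O(\log M)$ bits per large-element input. Your proposal hand-waves precisely this step (``folds $\ell$ into the disjoint-interval structure ... shrinks $\mathcal{F}$ no further''), but restoring disjointness \emph{can} shrink $\mathcal{F}$ --- this is the paper's Case 3, where the new element's interval contains $\Theta(M/w)$ others and the shrink factor is $\Theta(M/w)$. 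You need that accounting, not an assertion that it is free.

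Concretely: replace the per-subproblem shape invariant with a single disjointness invariant over the unresolved large elements in memory, keep your $\mathcal{F}$/largest-subfamily machinery if you like, but (i) bound the small-block fan-out by $\prod_i (n_{p,i}+1) \le 2^{\sum_i n_{p,i}} = 2^B$ rather than by straddler case analysis, and (ii) explicitly handle the three overlap patterns when a new large element's narrowed interval collides with existing ones, showing each costs at most $O(\log M)$ bits.
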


\noindent
\textbf{Proof of Lemma~\ref{adversarylemma}:} We prove this lemma  by describing the adversary. We capture the information learned at every point of the algorithm by assigning a \defn{search interval} to every large element:

\begin{definition} [\defn{Search interval}]
A search interval $R(\ell)=(s_i,s_j)$ for a large element $\ell$
at step $t$ is the
narrowest interval of small elements where $\ell$ can possibly land in the 
final sorted order, given what the algorithm has learned so far.
\end{definition}

It will be useful to consider the binary tree $\mathcal{T}$ on the set $\setShort$. The search interval of any large element at any point during the execution of the algorithm is a contiguous collection of leaves in $\mathcal{T}$. Note that it can never be disconnected.

For simplicity we will assume that the size of $\setShort$ is a power of $2$, and hence $\mathcal{T}$ is perfectly balanced. Also, if $R(\ell)=(s_{i},s_{j})$ is the range of a large element, we will make sure the adversary ``rounds off'' the search space so that the new range corresponds exactly to a subtree of some node in $\mathcal{T}$. This is accomplished by first finding the least common ancestor $lca$ of $s_{i}$ and $s_{j}$,  and then shrinking the search space of $\ell$ to either the search space in the left subtree of $lca$ or to the search space in the right subtree of $lca$, whichever is larger. Thus each large element $\ell$ at any time has an associated node in $\mathcal{T}$, which we denote by $v(\ell)$. We also denote the interval corresponding to $v(\ell)$ (this is just the interval of its subtree) as $\intr(v(\ell))$. 

In the remainder, we find it convenient to work with logarithms of size of search spaces. For this purpose, we will use the term ``bit''. The learning of one ``bit'' by the algorithm corresponds to the halving of the search space of some large element.

\noindent\textbf{Mechanics of the adversary's strategy:} Our adversary will try to maintain the following invariant at all times during the execution of the algorithm.

\noindent\textbf{Invariant:} The search intervals of large elements in main memory are disjoint.

We denote by $\{\ell^{p - 1}_{i}\}_{i = 1}^{M/w}$ the set of at most $M/w$ large 
elements in memory before the $p$th I/O.
By hypothesis, the nodes in $\mathcal{T}$ belonging to the set
$\{v(\ell^{p - 1}_{i})\}_{i = 1}^{M / w}$ have no ancestor-descendant
relationships between them.
We write $S^{p - 1}_{i}$ to denote $\intr(v(\ell^{p - 1}_{i}))$, the 
search interval of large element $\ell^{p - 1}_{i}$ at step $p - 1$.

\noindent \textbf{Small-block input.} Consider the incoming block. 
We denote $n_{p,i}$ as the number of  incoming small elements that belong to $S^{p - 1}_{i}$.
These elements divide $S^{p - 1}_{i}$ into $n_{p,i} + 1$ parts
$\{P_{1}, \ldots, P_{n_{pi} + 1}\}$, some of them possibly empty.
The largest of these parts (say $P_{j}$) is of size at least $1 / (n_{p,i} + 1)$ 
times the size of $S^{p - 1}_{i}$.
The new search interval of $\ell^{p}_{i}$ is defined to be the highest node in $\mathcal{T}$ such that $\intr(v) \subset P_{j}$.

\noindent \textbf{Large element input.}
On an input of a large element $\ell^{p}_{\text{new}}$ (with search interval $S^{p - 1}_{\text{new}}$),
the adversary uses a strategy similar to that one on a small-block input to compare  
$\ell^{p}_{\text{new}}$ with the (at most)
$M$ small elements present in memory. These $M$ small elements divide 
$S^{p-1}_{\text{new}}$ into at most $M$ parts, and the new search interval
of $\ell^{p}_{\text{new}}$ corresponds to the highest node in $\mathcal{T}$ that contains the largest part.

This is the temporary search interval $S_{\text{new}}$, with the corresponding
node $v_{\text{new}}$.

$S_{\text{new}}$ can be related to the search intervals of large elements in memory in three ways:
\paragraph{Case 1.}
The element $\ell^{p}_{\text{new}}$ shares a node
with another large element $\ell^{p}_{i}$.
The conflict is resolved by sending $\ell^{p}_{\text{new}}$ 
and $\ell^{p}_{i}$ to the left and right children of $v_{\text{new}}$, 
respectively.
\paragraph{Case 2.}
The element $\ell^{p}_{\text{new}}$ has an ancestor in memory.
The ancestor is sent one level down, to the child that does not contain
$v_{\text{new}}$ in its subtree.
Thus the conflict is resolved while giving at most $O(1)$ bit.

\paragraph{Case 3.}
The element $\ell^{p}_{new}$ has descendants in memory.

Denote the nodes that are descendants of $v_{\text{new}}$ in $\mathcal{T}$ as
$v_{1}, \ldots, v_{M/w}$. Let the corresponding search intervals be $S^{p-1}_{1}, \ldots, S^{p-1}_{M/w}$, respectively. Let $X = \cup_{i = 1}^{M/w} S^{p-1}_{i}$ and $Y = S_{\text{new}} \setminus X$. The set $Y$ is a union of at most $M/w+1$ intervals, each of which we denote by $Y_{i}$.
Let $Z$ be the largest interval from the set $\{S^{p-1}_{1}, \ldots, S^{p-1}_{M/w}, Y_{1}, \ldots, Y_{M/w}\}$.
Hence, $ \vert Z \vert \geq \vert S_{\text{new}} \vert / (2M/w)$.

There are two cases to consider. The first case is when $Z = S^{p-1}_{i}$ for some $i$.
In this case, $S_{\text{new}} = S^{p - 1}_{i}$.
In doing this we have given at most $O(\log M)$ bits.
Now we proceed as in Case~1 to resolve the conflict with at most $O(1)$ extra bits.
Otherwise, if $Z = Y_{i}$ for some $i$, then the adversary allots $\ell^{p}_{\text{new}}$ to the highest node $v$ in $\mathcal{T}$ such that $\intr(v) \subseteq Z$.

\subsection{Analysis}

We have the following auxiliary lemmas:
\begin{lemma}
\lemlabel{klwlemma1}
On a small-block input, the adversary gives at most
$O(\log (n_{p,i} + 1))$ bits to $\ell^{p}_{i}$.
\end{lemma}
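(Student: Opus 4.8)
The claim is local to a single large element $\ell^{p}_{i}$ already in memory whose search interval is $S^{p-1}_{i} = I(v(\ell^{p-1}_{i}))$, and asks how many bits it loses when a block of small elements arrives, $n_{p,i}$ of which fall inside $S^{p-1}_{i}$. By the adversary's small-block rule, those $n_{p,i}$ elements partition $S^{p-1}_{i}$ into $n_{p,i}+1$ pieces, the adversary picks the largest piece $P_j$, which has length at least $|S^{p-1}_{i}|/(n_{p,i}+1)$, and then rounds down to the highest node $v$ of $\mathcal{T}$ with $I(v) \subseteq P_j$. So the plan is simply: (i) lower-bound $|I(v)|$ in terms of $|P_j|$, hence in terms of $|S^{p-1}_{i}|$; (ii) convert the ratio $|S^{p-1}_{i}|/|I(v)|$ into the number of bits given, which is its logarithm.

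For step (i), the key geometric fact about a perfectly balanced binary tree is that any interval $P$ of leaves of length $m$ contains a subtree (dyadic block) of size at least, say, $m/4$ — more precisely, the largest power of two that "fits" inside $P$ after alignment is within a constant factor of $m$. Concretely, if $2^t \le m/2 < 2^{t+1}$, one can always place an aligned dyadic interval of length $2^t$ inside any length-$m$ window, so $|I(v)| \ge 2^t > m/4$. (We do not even need the full strength here; a crude constant suffices.) Applying this with $m = |P_j| \ge |S^{p-1}_{i}|/(n_{p,i}+1)$ gives
$$|I(v)| \;\ge\; \frac{|P_j|}{4} \;\ge\; \frac{|S^{p-1}_{i}|}{4(n_{p,i}+1)}.$$
For step (ii), the number of bits given to $\ell^{p}_{i}$ by this I/O is by definition $\log\bigl(|S^{p-1}_{i}|/|I(v)|\bigr)$, and the displayed inequality yields
$$\log\frac{|S^{p-1}_{i}|}{|I(v)|} \;\le\; \log\bigl(4(n_{p,i}+1)\bigr) \;=\; 2 + \log(n_{p,i}+1) \;=\; O\bigl(\log(n_{p,i}+1)\bigr),$$
which is exactly the statement.

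I expect the only real care-point to be the tree-rounding estimate in step (i): one must be slightly careful that after the adversary has "rounded off" $S^{p-1}_{i}$ to an exact subtree $I(v(\ell^{p-1}_{i}))$, the largest part $P_j$ is a sub-window of that subtree, and the largest dyadic block inside a window of length $m$ is indeed $\Theta(m)$ rather than, say, $\Theta(m/\log m)$. This is true because dyadic intervals at scale $2^t$ tile the tree, so any window of length $\ge 2^{t+1}$ must fully contain at least one of them; choosing $t$ maximal with $2^{t+1} \le m$ gives $|I(v)| = 2^t \ge m/4$. Everything else is a one-line logarithm computation, so the lemma follows immediately once this combinatorial observation is in hand.
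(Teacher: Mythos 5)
Your proof is correct and follows essentially the same route as the paper: lower-bound $|P_j|$ by $|S^{p-1}_i|/(n_{p,i}+1)$, then show that a contiguous window of that length must fully contain an aligned dyadic subtree of proportional size, and take logarithms to bound the number of bits. The paper phrases the containment step by dividing $S^{p-1}_i$ into $2(n_{p,i}+1)$ equal parts and reasoning by cases; your dyadic-tiling observation is a slightly cleaner way of saying the same thing and handles the alignment issue more explicitly.
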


\begin{proof}
Observe that
\[
\| P_{j} \| \geq \frac{\|S^{p - 1}_{i}\|}{(n_{pi} + 1)}.	
\]

Divide $S^{p-1}_{i}$ into $2(n_{p,i} + 1)$ equal parts (with the last one being 
possibly smaller).
If $P_{j}$ is equal to the union of two consecutive such parts, there is a node 
in $\mathcal{T}$ corresponding to $P_{j}$, and the adversary has given exactly
$\log (n_{p,i} + 1)$ bits.
Otherwise, $P_{j}$ contains at least one of these parts, for which there is a 
node $\log (n_{p,i} + 1) + 1$ levels below $v(\ell^{p-1}_{i})$, which is 
how many bits the adversary gives in this scenario. 

In either case, the maximum number of bits given by the adversary is
$O(\log (n_{p,i} + 1))$, as claimed.\qed
\end{proof}

\begin{lemma}
\lemlabel{klwlemmashort}
On a small-block input, the adversary gives at most $O(B)$ bits.
\end{lemma}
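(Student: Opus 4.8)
The plan is to obtain this as a sum of the per-element estimate in \lemref{klwlemma1}, with the disjointness invariant supplying the needed constraint. First I would observe that on a small-block input no conflict-resolution bits are spent beyond those already accounted for in \lemref{klwlemma1}: the new search interval of each $\ell^{p}_{i}$ is, after rounding to a node of $\mathcal{T}$, contained in the part $P_{j}\subseteq S^{p-1}_{i}$ that the incoming small elements carve out, so the new intervals are contained in the old ones and in particular remain pairwise disjoint. Hence the invariant is maintained automatically, and the fan-out of this I/O factors as $\prod_{i} f_{i}$, where $f_{i}$ is the shrink factor of $\ell^{p}_{i}$; taking logarithms, the number of bits given on this I/O equals $\sum_{i}\log f_{i}\le \sum_{i} O(\log(n_{p,i}+1))$ by \lemref{klwlemma1}. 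I would also note in passing that incoming small elements lying in no $S^{p-1}_{i}$ contribute nothing, and that comparisons among the small elements themselves are useless, since $\setShort$ is given sorted in each of \sk, \klw, and \kk.

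Next I would bound $\sum_{i}\log(n_{p,i}+1)$ using the invariant. The incoming block contains exactly $B$ small elements, and the intervals $\{S^{p-1}_{i}\}_{i=1}^{M/w}$ are pairwise disjoint, so each incoming small element belongs to at most one $S^{p-1}_{i}$; therefore $\sum_{i} n_{p,i}\le B$. Applying the elementary inequality $\log_{2}(n+1)\le n$, valid for every integer $n\ge 0$ (equivalently $n+1\le 2^{n}$, by induction), gives $\sum_{i}\log(n_{p,i}+1)\le \sum_{i} n_{p,i}\le B$. Multiplying by the constant hidden in \lemref{klwlemma1} yields the claimed $O(B)$ bound on the total number of bits, i.e.\ a fan-out of $2^{O(B)}$ for a small-block input.

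The one step I would be most careful about — and the main obstacle to a clean argument — is the first: justifying that the disjointness invariant survives a small-block input at no extra charge, so that \lemref{klwlemma1} can simply be summed. The point to spell out is that replacing each $S^{p-1}_{i}$ by a sub-interval cannot introduce an ancestor–descendant relation among the chosen nodes $v(\ell^{p}_{i})$ that was not already present, and by hypothesis there was none; the full conflict-resolution machinery (Cases~1--3 in the adversary's strategy) is needed only for a large-element input, whose temporary interval can straddle several existing ones. Once that is in place, the remainder is the routine concavity-type estimate above.
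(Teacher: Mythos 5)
Your proof is correct and follows essentially the same route as the paper: sum the per-element bound of \lemref{klwlemma1} over the large elements in memory, use disjointness of the $S^{p-1}_{i}$ to get $\sum_{i}n_{p,i}\le B$, and apply $\log(n+1)\le n$ to conclude $\sum_{i}\log(n_{p,i}+1)\le B$, hence $O(B)$ bits total. Your added discussion of why the disjointness invariant survives a small-block input (so no conflict-resolution bits are incurred beyond \lemref{klwlemma1}) is a worthwhile clarification, and your $\le B$ is slightly more careful than the paper's $=B$ (some incoming small elements may lie in no search interval), but the core argument is the same.
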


\begin{proof}
This follows easily from \lemref{klwlemma1}.
Let $G$ denote the total number of bits given by the adversary during the input 
of a block of small elements.
It can be seen that $G = \sum_{i = 1}^{B} \left(\log(n_{p,i}+1)+1\right)$.
By definition $\sum_{i=1}^{B} n_{pi} = B$, implying that
$\sum_{i=1}^{B} \log(n_{p,i}+1) \leq B$, which in turn implies that
$G < 2B = O(B)$.\qed
\end{proof}

\begin{lemma}
\lemlabel{klwlemmalong}
During the input of a large element, the adversary gives at most $O(\log M)$ bits.
\end{lemma}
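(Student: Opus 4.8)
The plan is to split the input of a large element $\ell^{p}_{\text{new}}$ into its two conceptual phases — the comparison of $\ell^{p}_{\text{new}}$ against the (at most $M$) small elements currently in memory, which produces the temporary interval $S_{\text{new}}$ with node $v_{\text{new}}$, and the resolution of whatever conflict $v_{\text{new}}$ has with the large-element nodes already in memory (Cases~1--3) — bound the bits the adversary hands out in each phase by $O(\log M)$, and add. Throughout I keep track of bits given to \emph{all} large elements, not just $\ell^{p}_{\text{new}}$.

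\emph{Phase 1 (comparison with small elements).} The at most $M$ unit-sized elements in memory cut $S^{p-1}_{\text{new}}$ into at most $M+1$ parts, and the adversary keeps a node of $\mathcal{T}$ inside the largest part $P$, with $\|P\| \ge \|S^{p-1}_{\text{new}}\|/(M+1)$. I would reuse the rounding argument from \lemref{klwlemma1} with $n_{p,i}+1$ replaced by $M+1$: splitting $S^{p-1}_{\text{new}}$ into $2(M+1)$ equal parts, $P$ either equals a union of two consecutive such parts (a node of $\mathcal{T}$, costing exactly $\log(M+1)$ bits) or strictly contains one such part (a node $\log(M+1)+1$ levels below $v(\ell^{p-1}_{\text{new}})$). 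So Phase~1 gives $\ell^{p}_{\text{new}}$ at most $\log(M+1)+1=O(\log M)$ bits and gives no bits to any other large element.

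\emph{Phase 2 (conflict resolution).} Since the invariant forces the in-memory large-element nodes to be pairwise incomparable, $v_{\text{new}}$ falls into exactly one of Cases~1--3, and I would check that each costs only $O(\log M)$ additional bits, distributed over at most two large elements. Case~1 splits the shared node, $O(1)$ bits to each of $\ell^{p}_{\text{new}}$ and $\ell^{p}_{i}$. Case~2 pushes the unique in-memory ancestor one level down, $O(1)$ bits to it and none to $\ell^{p}_{\text{new}}$. Case~3 is the only place $\ell^{p}_{\text{new}}$'s interval can shrink again: the chosen $Z$ satisfies $\|Z\| \ge \|S_{\text{new}}\|/(2M/w)$, so re-rounding $\ell^{p}_{\text{new}}$ to a node of $\mathcal{T}$ inside $Z$ costs it at most $\log(2M/w)+1=O(\log M)$ further bits, while the descendants $v_{1},\dots,v_{M/w}$ keep their search intervals and are charged nothing; in the sub-case $Z=S^{p-1}_{i}$ one then invokes the Case~1 mechanism once, adding $O(1)$ bits to $\ell^{p}_{\text{new}}$ and $\ell^{p}_{i}$.

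Adding the two phases, $\ell^{p}_{\text{new}}$ is charged at most $\log(M+1)+\log(2M/w)+O(1)=O(\log M)$ bits and every other large element is charged $O(1)$ bits, for a total of $O(\log M)$ bits during a single large-element input, as claimed. I do not expect a genuine obstacle; the one point requiring care is the bookkeeping in Case~3 — confirming that computing $Z$ and re-rounding $\ell^{p}_{\text{new}}$ to a node of $\mathcal{T}$ does not trigger any further shrinkage of the descendants' intervals, and that the $O(\log M)$ shrinkage in the $Z=S^{p-1}_{i}$ branch is paid once before control passes to the $O(1)$ Case~1 resolution.
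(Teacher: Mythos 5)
Your proof is correct and takes essentially the same two-phase approach as the paper: the comparison with the at most $M$ small elements in memory costs $O(\log M)$ bits, and the Case~1--3 conflict resolution costs an additional $O(\log(M/w)) = O(\log M)$ bits. Your writeup simply fills in the rounding details that the paper's terse proof leaves implicit.
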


\begin{proof}
The number of bits given due to comparisons with small elements already in memory 
is $O(\log M)$.
In each of the three cases an additional $O(\log (M/w))$ bits 
are given.
Thus, the total number of bits given by the adversary during the I/O of a large 
element is $O(\log M)$.\qed
\end{proof}

\subsection{Putting It All Together: getting lower bounds for \sk, \klw and \kk}

\paragraph{1) \sk Lower Bound.}
The proof rests on the following action of the adversary: 
in the very beginning, the adversary gives the algorithm the extra information that the $i$th largest large element lies somewhere between $s_{(i-1)\alpha}$ and $s_{i\alpha}$, where $\alpha = \volShort/k$. In other words, the adversary tells the algorithm that the large elements are equally distributed across $\mathcal{S}$, one in each chunk of size $\volShort/k$ in $\mathcal{S}$.

This deems the invariant of large elements in main memory having disjoint search intervals automatically satisfied.

Because any algorithm that solves \sk must achieve $\Omega(k \log (S/k))$ bits of information, we have that 
\begin{lemma}
\lemlabel{sklemma}
$\sk = \Omega\left(\min \left(\frac{kw}{B}\log_{M} \frac{S}{k} , \frac{k}{B} \log \frac{S}{k} + \frac{kw}{B}\right) \right).$
\end{lemma}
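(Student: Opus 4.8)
The plan is to combine the per-I/O fan-out bounds of \lemref{adversarylemma} with the information content of \sk and a ``read-once'' argument, via a two-case accounting. First I would record what must be learned: after the adversary's opening move (the $i$th large element is confined to the chunk $(s_{(i-1)\alpha}, s_{i\alpha})$ with $\alpha = S/k$), any algorithm for \sk must still distinguish among $(S/k)^k$ outcomes, i.e.\ gain $X = k\log(S/k)$ bits — this is the value of $X$ tabulated just before the lemma, and since a correct algorithm for the unrestricted \sk is in particular correct on this restricted family, $X$ is a legitimate lower bound on the information that must be acquired. Second, because the opening move confines each $\ell_i$ to its own chunk and the chunks are pairwise disjoint, the invariant of \lemref{adversarylemma} (search intervals of large elements in memory are pairwise disjoint) holds automatically throughout the execution, so that lemma applies verbatim: the input of a block of $B$ small elements contributes at most $O(B)$ bits, and the input of one large element, which costs $w/B$ I/Os, contributes at most $O(\log M)$ bits. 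Since comparisons are free once elements reside in RAM and keeping RAM sorted needs no I/O, these are the only operations that reveal information.

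Next I would add the read-once observation: to output the stripe of $\ell_i$ the algorithm must at some point hold $\ell_i$ entirely in memory, and since $\ell_i$ occupies $w/B$ blocks of disk that no other large element shares, reading every large element at least once already costs $\tfrac{kw}{B}$ I/Os. Let $a$ be the number of large-element inputs and $b$ the number of $B$-small-element-block inputs the algorithm performs; then the total I/O count is at least $a\tfrac{w}{B}+b$ and at least $\tfrac{kw}{B}$, while from the previous paragraph $a\cdot O(\log M) + b\cdot O(B) \ge k\log(S/k)$. The final step is a pigeonhole on where these $k\log(S/k)$ bits come from. If at least half of them come from large-element inputs, then $a = \Omega\!\left(k\log(S/k)/\log M\right)$, so the I/O count is $\Omega\!\left(aw/B\right) = \Omega\!\left(\tfrac{kw}{B}\log_M (S/k)\right)$, the first branch of the minimum. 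Otherwise at least half come from small-block inputs, so $b = \Omega\!\left(k\log(S/k)/B\right)$, and combining with the $\tfrac{kw}{B}$ read-once cost yields $\Omega\!\left(\tfrac{k}{B}\log (S/k) + \tfrac{kw}{B}\right)$, the second branch. In either case the cost is $\Omega$ of the stated minimum; in the degenerate regime $S/k < M$ one falls back on the read-once bound $\tfrac{kw}{B}$ directly, which still dominates the first branch, so I would dispose of that in a line.

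I do not expect this lemma to contain the main difficulty — the technical heart is \lemref{adversarylemma} itself, which I am allowed to assume. The two points requiring care are (i) that handing the algorithm free information only weakens it, so the lower bound genuinely transfers to the unrestricted \sk, and (ii) that the additive $\tfrac{kw}{B}$ term must be forced by the separate read-once argument rather than by the information accounting, since when $\log(S/k)$ is small the bit budget alone does not ``see'' the cost of merely touching each large key; keeping these two sources of cost bookkept separately (and then taking the minimum only at the very end, over the two pigeonhole cases, never over the algorithm's choices) is the one place where a sloppy write-up would go wrong.
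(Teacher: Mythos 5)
Your proposal is correct and follows essentially the same route as the paper: the adversary's free ``one per chunk'' disclosure, the observation that this makes the disjoint-search-interval invariant automatic so \lemref{adversarylemma} applies verbatim, and the $k\log(S/k)$ bit count against the $O(B)$-per-small-block and $O(\log M)$-per-$(w/B)$-I/Os rates. The paper states the conclusion tersely after these ingredients; your explicit pigeonhole on where the bits come from and the separately bookkept read-once $kw/B$ term are exactly the implicit final step, and you are right that the additive $kw/B$ in the second branch must come from the read-once argument, not the information budget.
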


\paragraph{2) \klw Lower Bound.}
To solve \klw, an algorithm needs to learn $k\log \tilde{k}$ bits of information. Using the adversary strategy we described, we obtain the following lower bound:

\begin{lemma}
\lemlabel{klwlemma}
\klw = $\Omega\left(\min \left(\frac{\lw w}{B} \log_{M} k, \frac{\lw}{B} \log k + \frac{\lw w}{B}\right)\right).$
\end{lemma}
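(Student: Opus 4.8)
The plan is to mirror the structure of the \sk lower bound, now applied to the \klw instance. Recall that in \klw there are only $k+1$ small elements $s_1,\dots,s_{k+1}$ (with $s_1=-\infty$, $s_{k+1}=\infty$) and $\lw$ large elements, and any algorithm must learn $X = k\log\lw$ bits of information (each large element must be assigned to one of the $k$ gaps between consecutive small elements). Actually, to be careful: the quantity that we need as $X$ is the logarithm of the number of output permutations; since each of the $\lw$ large elements independently chooses one of $k$ gaps, this is $\lw\log k$ bits, which is the form that will produce the stated bound — I would restate $X = \lw\log k$ at the start of the proof to match the lemma. The adversary here does not need the artificial ``equal distribution'' preprocessing of the \sk case, because with only $k$ gaps the tree $\mathcal{T}$ on $\setShort$ has only $O(\log k)$ levels, so disjointness of search intervals is cheap to maintain; I would simply invoke Lemma~\ref{adversarylemma} directly.

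The argument then proceeds by a standard leaf-counting / fan-out accounting on the decision tree, now in two regimes depending on the size $w$ of the large elements, exactly as in \lemref{sklemma}. Let the algorithm perform $x$ small-block I/Os and input large elements a total of $y$ times (each large-element input costing $w/B$ I/Os, so contributing $yw/B$ to the I/O count). By \lemref{klwlemmashort} each small-block input gives at most $O(B)$ bits, and by \lemref{klwlemmalong} each large-element input gives at most $O(\log M)$ bits. To accumulate $\lw\log k$ bits we therefore need
\[
  xB + y\log M = \Omega(\lw\log k).
\]
Now I split into cases. \emph{Case $w$ large} (so that each large element can realistically be brought in only a bounded number of times, i.e. the $L/B$-type term is not the binding constraint): here large-element inputs are the efficient way to learn, and optimizing $x,y$ subject to the constraint, with total cost $x + yw/B$, the balance point gives cost $\Omega\!\left(\frac{\lw w}{B}\log_M k\right)$; this comes from setting $x=0$ and $y = \Omega(\lw\log k/\log M)$, whose cost in I/Os is $\frac{w}{B}\cdot\frac{\lw\log k}{\log M} = \frac{\lw w}{B}\log_M k$. \emph{Case $w$ small}: now every large element must still be read at least once just to be placed, costing $\Omega(\lw w/B)$ I/Os as an unavoidable additive term; on top of that, small-block I/Os are the cheaper vehicle for the remaining information, costing $\Omega(\frac{\lw}{B}\log k)$ from the constraint $xB = \Omega(\lw\log k)$. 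Summing, the cost is $\Omega\!\left(\frac{\lw}{B}\log k + \frac{\lw w}{B}\right)$. Taking the minimum of the two regimes yields the claimed
\[
  \klw = \Omega\!\left(\min\!\left(\frac{\lw w}{B}\log_M k,\ \frac{\lw}{B}\log k + \frac{\lw w}{B}\right)\right).
\]

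The main obstacle I anticipate is justifying the $\Omega(\lw w/B)$ additive ``must-read-everything'' term cleanly inside the decision-tree framework: the fan-out accounting above bounds the number of I/Os needed to acquire the information bits, but does not by itself force every large element to be transferred. I would handle this by the usual output-size argument — the output must contain each large element placed into its stripe, so each of the $\lw$ large elements (total volume $\lw w$) must be touched, giving the $\lw w/B$ lower bound unconditionally — and then observe that the overall bound is the maximum (hence, up to constants, the sum) of this trivial bound and the information-theoretic bound, which is exactly what the $\min$ of the two expressions encodes once one checks which regime each term dominates in. A secondary subtlety is that $\log_M k$ and $\log k/\log M$ must be handled so the two cases of the $\min$ genuinely cross over at the right value of $w$ (roughly $w \approx \log M$); I would note that for $w = O(\log M)$ the second term is smaller and for $w = \Omega(\log M)$ the first is, so the $\min$ is the honest bound in all cases.
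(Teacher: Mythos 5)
Your proposal is correct and matches the paper's intended approach. The paper itself gives only a one-line justification for this lemma (``using the adversary strategy we described''), and what you supply is exactly the missing accounting: invoke Lemma~\ref{adversarylemma} for the $O(B)$ and $O(\log M)$ per-I/O bit bounds, add the trivial $\Omega(\lw w/B)$ term for reading every large element, and optimize. You also correctly catch what is surely a typo in the paper's text: the information content for \klw is $\lw\log k$, as stated in the earlier ``format of lower bounds'' paragraph, not ``$k\log\tilde{k}$'' as written just before the lemma.

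One wrinkle in your intuition labels: you describe the case where $w$ is large as the one where large-element inputs are ``the efficient way to learn'' (yielding $\frac{\lw w}{B}\log_{M}k$), and the case where $w$ is small as the one where small-block I/Os carry the information (yielding $\frac{\lw}{B}\log k + \frac{\lw w}{B}$). This is backwards. A small-block I/O gives $O(B)$ bits for one I/O, while a large-element input gives $O(\log M)$ bits for $w/B$ I/Os, i.e.\ about $B\log M/w$ bits per I/O; so large-element inputs are the efficient learning channel precisely when $w \lesssim \log M$, producing the $\frac{\lw w}{B}\log_{M}k$ regime, and small blocks become the cheap channel when $w \gg \log M$, at which point the mandatory $\Omega(\lw w/B)$ read turns into the additive term of the second regime. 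This is exactly the PLE-DFS vs.\ PLE-BFS crossover the paper's Remark~1 describes. Since you take the min of the two expressions at the end, the final bound you state is correct; the two case labels just need to be swapped.
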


\paragraph{3) \kk Lower Bound.}
To solve $k$-$k$, an algorithm needs to learn $k\log k$ bits of information. 
In the $k$-$k$ problem, we expect to produce the perfect interleaving of the small and large
elements in the final sorted order. That is, \emph{each element lands in its own leaf of $\TT$}.

Therefore, the adversary does not posses the freedom to route elements down the tree at all times
using the strategy we described. Instead, the strategy is used for a fraction of total bits the algorithm learns, and the remaining fraction is used to make up for the potential imbalance created by sending more elements to the left or to the right. We call these \emph{type one} and \emph{type two} bits, respectively. Late bits are effectively given away for free by the adversary.

More formally, we define the \defn{node capacity ($c^{T}(v)$)} as  the number  of large elements that pass through $v$ during the execution of an algorithm. If the $k$-$k$ algorithm runs in $T$ I/Os, then the node capacity of $v$ at a level $h$ of $\TT$ is designated by $c^{T}(v) = k / 2^{h}$.

\begin{definition}[\defn{type one} and \defn{type two bits}]
\deflabel{typesofbits}
A bit gained by a large element $\ell$ is an \defn{type one} bit if,
when $\ell$ moves from $v$ to one of $v$'s children, at most $c^{T}(v)/4 - 1$ other large
elements have already passed through $v$.  The remainder of the bits are
\defn{type two} bits.  
\end{definition}

Because a small-block input gives $O(B)$ bits and a large-element input gives $O(\log M)$ bits,
and we need to achieve all type one bits to solve the problem (there are $(k \log k) / 4$ of them), we obtain the following lower bound:


\begin{lemma}
\lemlabel{kklemma}
$\kk = \Omega\left(\min \left(\frac{kw}{B} \log_{M} k, \frac{k}{B} \log k + \frac{kw}{B}\right) \right).$
\end{lemma}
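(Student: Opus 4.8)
The plan is to mimic the structure of the \sk\ and \klw\ lower bounds but account for the fact that the adversary can no longer freely route large elements down $\TT$: in the \kk\ problem each of the $k$ large elements must land in its own leaf, so the final leaf-occupancy is forced, and the adversary loses the slack it had in the previous two cases. The key device is the partition of the $k\log k$ bits the algorithm must learn into \emph{type one} and \emph{type two} bits (Definition~\ref{def:typesofbits}). First I would argue that at least a constant fraction of the total bits are type one: when a large element $\ell$ descends from a node $v$ at level $h$, the node capacity is $c^{T}(v)=k/2^{h}$, and the bit is type one unless at least $c^{T}(v)/4$ other elements have already passed through $v$. A simple counting/charging argument over each level $h$ shows that the number of type two bits at level $h$ is at most (roughly) a constant times the number of nodes at that level times their capacity, i.e.\ bounded by a constant fraction of $k$ per level, so summed over the $\log k$ relevant levels the type two bits number $O(k\log k)$ but with a constant strictly less than one; hence the type one bits number $\Omega(k\log k)$ — in fact I would just invoke the stated count of $(k\log k)/4$ type one bits.

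Next I would combine this with Lemma~\ref{adversarylemma}: every small-block input reveals at most $O(B)$ bits and every large-element input (costing $w/B$ I/Os) reveals at most $O(\log M)$ bits, and crucially the adversary strategy of Lemma~\ref{adversarylemma} only ever "charges" type one bits against these fan-out bounds — type two bits are given away for free and do not consume any I/O budget. So if the algorithm performs $x$ small-block I/Os and $y$ large-element inputs, then $xB + y\log M = \Omega(k\log k)$, while the total I/O count is $T \ge x + y\,(w/B)$. Minimizing $x + yw/B$ subject to $xB + y\log M \ge c\,k\log k$ is a two-variable linear program whose optimum is attained at one of the two extreme points: either put everything into small-block I/Os, giving $T = \Omega\!\left(\frac{k\log k}{B}\right)$, or — wait, this needs the split exactly as in the previous lemmas. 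The correct split is: the $\Omega(k\log k)$ bits must either be paid for by large-element inputs at rate $O(\log M)$ bits per $w/B$ I/Os, yielding $\Omega\!\left(\frac{kw}{B}\log_M k\right)$ I/Os; or there is an additional unavoidable $\Omega(kw/B)$ term because every one of the $k$ large elements must be read into memory at least once (each costing $w/B$ I/Os) and, once we are not exploiting the $\log M$ speedup of large-element inputs, small-block I/Os contribute $\Omega\!\left(\frac{k\log k}{B}\right)$. Taking the minimum over these two regimes gives exactly $\Omega\!\left(\min\!\left(\frac{kw}{B}\log_M k,\ \frac{k}{B}\log k + \frac{kw}{B}\right)\right)$, as claimed. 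I would present this as: by Lemma~\ref{adversarylemma} each I/O is either a small-block I/O contributing at most $2^B$ fan-out or a large-element input contributing $O(M)$ fan-out per $w/B$ I/Os, and the standard decision-tree argument ($T \cdot (\text{per-I/O log-fan-out}) \ge$ total bits) applied piecewise gives the two terms.

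The main obstacle — and the only genuinely new ingredient beyond Lemma~\ref{adversarylemma} — is the type one/type two accounting: one has to verify that the adversary can always honor the invariant (disjoint search intervals of in-memory large elements) while routing elements so that whenever it is forced to create imbalance (sending "too many" elements through a node $v$, i.e.\ more than $c^T(v)/4$), those excess bits can be classified as type two and declared free without breaking the final requirement that all $k$ leaves get filled. Concretely, the delicate point is showing that the $3k\log k/4$ type two bits suffice to absorb all the imbalance: the adversary must be able to "redirect" elements that would otherwise overflow a subtree into the sibling subtree, and one needs a potential/capacity argument (essentially: the capacities $c^T(v)=k/2^h$ are exactly additive down the tree, so the total overflow that can accumulate is bounded) to guarantee this never forces the algorithm to learn a type one bit for free or, conversely, never leaves a leaf unfillable. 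Once that bookkeeping is in place, plugging the $(k\log k)/4$ type one bit count into the per-I/O fan-out bounds of Lemma~\ref{adversarylemma} and optimizing is routine and parallels Lemmas~\ref{sklemma} and~\ref{klwlemma} verbatim.
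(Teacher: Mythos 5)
Your proposal is correct and follows essentially the same route as the paper: the type one/type two bit dichotomy via node capacities $c^{T}(v)=k/2^{h}$, the count of $(k\log k)/4$ type one bits (which, as you note, comes from each of the $2^{h}$ nodes at level $h$ contributing $c^{T}(v)/4$ type one bits, giving $k/4$ per level over $\log k$ levels), the per-I/O fan-out bounds of Lemma~\ref{adversarylemma} applied to the type one bits, and the extra $kw/B$ term from forcing each large element into memory at least once. The paper's own treatment is terser — it states the type one/type two definitions and immediately concludes — whereas you correctly identify and flag the bookkeeping that makes the adversary consistent with the fixed final interleaving (the capacity/overflow argument), which is indeed the substance that the definition of type two bits is designed to absorb; your mid-proof course-correction on the linear-program framing lands in the same place once you add the constraint $y\ge k$.
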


Now we combine everything to get the proof of our PLE and sorting lower bounds.

\noindent\textbf{Proof of \thmref{plelower}}

The lower bounds for \kk, \klw and \sk are each a minimum of two terms; it is safe to add the respective terms as the transition between which term dominates occurs at exactly the same value of $w$ for each of the subproblems. Adding the terms for the lower bounds of \kk and \sk provides the $\frac{k}{B} \log S$ and $\frac{kw}{B} \log_{M} S$ terms in \thmref{plelower}. Adding the terms for the lower bounds of \kk and \klw, and using that $k+\tilde{k} = L/w$ provides the $\frac{L}{wB}\log S$ and $\frac{L}{B}\log_{M}S$ terms in \thmref{plelower}.

\subsection{Generalization of lower bounds to the case when $w_{1}<w_{2}<B$}\label{allcases}
So far, our assumptions on the record sizes accommodate one set of records of unit size, and the other set contains items larger than a block. But what if we have two record sizes, where both can be relatively large but still smaller than a block? In this section, we generalize the lower bound results to this case.

The number of bits required by an algorithm remains unchanged as that is an information-theoretic lower bound. It remains to see how the invariant maintained by the adversary limits the information achieved by any algorithm.

The input of a small block contains $B/w_{1}$ elements now. Since the large elements in memory have disjoint search spaces, the maximum number of bits achievable by this I/O is $B/w_{1}$, which is the case when each of these small elements is a pivot for a unique large element. Thus we get $O(B/w_{1})$ bits per I/O.

The input of a large block contains $B/w_{2}$ large elements. The memory can contain at most $(M-B)/w_{1}$ small elements, and so the total number of possible permutations achievable is

\begin{eqnarray}
 P &=& {\frac{M-B}{w_{1}}+\frac{B}{w_{2}} \choose \frac{B}{w_{2}}} \left(\frac{B}{w_{2}}!\right) \notag \\
 &<& {\frac{M}{w_{1}} \choose \frac{B}{w_{2}}} \left(\frac{B}{w_{2}}!\right) \notag \\
 &<& \left( \frac{ew_{2}M}{Bw_{1}} \right)^{B/w_{2}} \left(\frac{B}{w_{2}}!\right) \notag 
\end{eqnarray}
 This gives 
 \begin{eqnarray}
 \log P &=& O \left( \frac{B}{w_{2}} \log \left( \frac{Mw_{2}}{Bw_{1}} \right) + \frac{B}{w_{2}}  \log \left(\frac{B}{w_{2}} \right) \right) \notag \\
 &=& O \left( \frac{B}{w_{2}} \log \left( \frac{M}{w_{1}} \right)\right) \notag
 \end{eqnarray}
 bits per I/O.
 
In both cases, the amortized number of bits achieved is:
\begin{enumerate}
\item $O(B/w_{1})$ bits per I/O, equivalent to $O(1)$ bit per $w_{1}/B$ I/Os.
\item $ \frac{B}{w_{2}} \log \left( \frac{M}{w_{1}} \right)$ bits per I/O, equivalent to $\log(M/w_{1})$ bits per $w_{2}/B$ I/Os.
\end{enumerate}
\qed

\vspace{-1mm}\section{Upper bounds on Sorting and the Batched Predecessor Problem}\label{BFS}

\vspace{-1mm}Our algorithm for $\sortN$ works in three steps:1) sort the short elements using traditional multi-way external memory merge-sort \cite{AggarwalVi88}, 2) solve the associated $\plesl$ problem, and 3) sort the long stripes obtained again using multi-way mergesort. The first and third steps give the first and third terms in the sorting complexity in Theorem 2.

We give two algorithms to solve $\plesl$: PLE-DFS and PLE-BFS. The final upper bound is the minimum of the two terms, as presented in \thmref{pleupper}.

\noindent\textbf{PLE-DFS:} PLE-DFS builds a static B-tree $\tree$ on $S$, and searches for large
elements in $\tree$ one by one.
This approach is preferred in the case of really large elements, and it is better
to input them fewer times.

We dynamically maintain a smaller B-tree $\tree^{\prime}$ that
contains only \defn{border elements} (the two small elements 
sandwiching each large element in the final sorted order)
and has depth at most $\log_{B}k$.
All large elements first travel down $\tree^{\prime}$ to locate their stripe.
Only those elements for which their stripe has not yet been discovered  need to
travel down $\tree$.
After a new stripe is discovered in $\tree$, it is then added to
$\tree^{\prime}$. The total cost becomes

\begin{equation}
\eqlabel{ple-dfs}
O\left(\frac{\volLong}{\lenLong}\log_{B} \numStripe + \numStripe \log_{B} S + \frac{L}{B} + \frac{S}{B}\right).
\end{equation}

\noindent\textbf{PLE-BFS:} Our second algorithm for PLE uses a batch-searching tree with fanout
$\Theta(M)$.
When a node of the tree is brought into memory, we route all large elements via
the node to the next level.
We process the nodes of the $M$-tree level by level so all large elements proceed at an
equal pace from the root to leaves.
This technique is helpful when large elements are sufficiently small so that
bringing them many times into memory does not hurt while they benefit from a
large fanout.

The analysis is as follows: at each level of M-tree, the algorithm spends
$\Theta(L/B)$ I/Os in large-element inputs.
Every node of the tree is brought in at most once, which results in total
$O(S/B)$ I/Os in small-element inputs.
The total number of memory transfers for PLE-BFS then becomes
\begin{equation}
\eqlabel{ple-bfs}
O\left(\frac{L}{B} \log_{M} \volShort + \frac{\volShort}{B}\right).
\end{equation}

Our final upper bound is the better of the two algorithms:

\begin{theorem}[\textbf{PLE Upper Bound}]
\thmlabel{pleupper}
$$\plesl = O\left( \min\left\{ \bfsub, \dfs \right\}\right).$$
\end{theorem}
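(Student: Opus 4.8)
The plan is to establish \thmref{pleupper} by exhibiting the two algorithms PLE-DFS and PLE-BFS separately, analyzing each, and then observing that running whichever has smaller cost (which we may decide up front, since the cost of each is a known function of $S,L,w,k,M,B$) yields the stated minimum. So the proof splits cleanly into two independent parts, and the final step is just ``take the better of the two.''

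First I would handle PLE-DFS, which gives the $\dfs$ term. The ingredients are: (i) a static B-tree $\tree$ on the sorted set $\setShort$ of $S$ small elements, which has depth $O(\log_B S)$ and costs $O(S/B)$ I/Os to build by a linear scan; (ii) a dynamically maintained B-tree $\tree'$ holding only the (at most $2k$) border elements, hence of depth $O(\log_B k)$. Every large element is routed down $\tree'$ first; since a large element has size $w \ge B$, touching one node of $\tree'$ along the root-to-leaf path costs $O(w/B)$ I/Os per level (loading the node's $\Theta(B)$ keys is $O(1)$ I/Os, loading the large element is $w/B$ I/Os), so routing all $L/w$ large elements down $\tree'$ costs $O(\frac{L}{w}\log_B k \cdot \frac{w}{B})$... wait — the stated bound is $\frac{L}{w}\log_B k$, so the accounting must charge the large-element transfer cost once per level and amortize; I would be careful here to match the paper's $\frac{\volLong}{\lenLong}\log_B \numStripe$ term, treating the traversal of $\tree'$ as $\log_B k$ rounds, each costing $O(L/B)$ I/Os over all active large elements, which is $O(\frac{L}{B}\log_B k) = O(\frac{L}{w}\cdot\frac{w}{B}\log_B k)$; reconciling the $w$ versus $B$ in the denominator is a bookkeeping point I'd want to get exactly right against the macro definition. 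Only the $k$ large elements that discover a fresh stripe descend the big tree $\tree$, costing $O(k\log_B S)$ rounds of $O(w/B)$ each, i.e. the $\numStripe\log_B S$ term; inserting discovered borders into $\tree'$ is subsumed. Adding the $L/B$ term for reading all large elements at least once and $S/B$ for building $\tree$ gives \eqref{eq:ple-dfs}.

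Next, PLE-BFS, giving the $\bfsub$ term. Here I build a batched-search tree of fanout $\Theta(M)$ on $\setShort$; its depth is $O(\log_M S)$. Processing proceeds level by level: at each of the $O(\log_M S)$ levels, every large element is routed through exactly one node, and since all $L/w$ large elements are in play, the large-element I/O cost per level is $O(\frac{L/w \cdot w}{B}) = O(L/B)$, for a total of $O(\frac{L}{B}\log_M S)$. Each tree node, of size $\Theta(M)$, is brought into memory at most once over the whole algorithm (because we sweep level by level and a node is touched only while processing its level), contributing $O(S/B)$ total; building the tree is also $O(S/B)$. Summing gives \eqref{eq:ple-bfs}. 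The key correctness points to verify are that a node of fanout $\Theta(M)$ together with its routed large elements fits in memory — this needs $w \le M/2$ and the fanout chosen as, say, $\Theta(M/w)$ worth of splitters plus the working set, which is where the hypothesis $1 < w \le M/2$ is used — and that routing a large element through one node correctly narrows its stripe, so that after $O(\log_M S)$ levels each large element has located its stripe.

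The main obstacle is the amortization in the I/O accounting, specifically making sure the large-element transfer cost $w/B$ is charged exactly the right number of times in each algorithm so the bounds come out as the macros $\dfs$ and $\bfsub$ state (in particular the $\frac{L}{w}$ versus $\frac{L}{B}$ distinction, and that PLE-BFS does \emph{not} pay an extra $w/B$ factor on the $\log_M S$ term while PLE-DFS's structure forces the $\log_B$ base). A secondary subtlety is confirming a node of the $\Theta(M)$-fanout tree plus its batch of routed large elements genuinely fits in internal memory; once both per-level costs and the $O(S/B)$ node-loading and build costs are pinned down, the theorem follows immediately by taking the minimum of \eqref{eq:ple-dfs} and \eqref{eq:ple-bfs}, since an algorithm free to pick the cheaper of the two achieves $O(\min\{\bfsub,\ \dfs\})$.
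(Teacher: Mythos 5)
Your overall structure matches the paper exactly: exhibit PLE-DFS and PLE-BFS, analyze each to get $\dfs$ and $\bfsub$ respectively, and take the minimum. The PLE-BFS analysis is correct and matches the paper's.

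However, the ``bookkeeping point'' you flag for PLE-DFS is a genuine gap, and the reconciliation you sketch is the wrong one. You propose treating the traversal of $\tree'$ as $\log_B k$ rounds, each costing $O(L/B)$ I/Os over all active large elements, giving $O\left(\frac{L}{B}\log_B k\right)$. But that is \emph{larger} than the claimed $\frac{L}{w}\log_B k$ (since $w\ge B$), so level-by-level processing cannot yield the stated bound --- it would give PLE-BFS-like accounting with a worse base of logarithm. The correct accounting, which is what makes this algorithm a DFS, is to process large elements one at a time: bring a single large element of size $w\le M/2$ into memory (cost $w/B$ I/Os, paid \emph{once} per element, summing to $L/B$ --- this is exactly the separate $+\,L/B$ term in $\dfs$), and then, with that element resident, walk it down $\tree'$ loading one B-tree node per level. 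Each node has $\Theta(B)$ unit-sized pivots, so it fits in one block, and each node-load costs $O(1)$ I/Os while the large element stays put. Thus descending $\tree'$ costs $O(\log_B k)$ I/Os per large element beyond the one-time load, giving $\frac{L}{w}\log_B k$ over all $L/w$ elements; the $k$ elements that discover a new stripe additionally descend $\tree$ at $O(\log_B S)$ node-loads each, giving $k\log_B S$. This decoupling of the one-time $w/B$ element-load from the per-level $O(1)$ node-loads is the idea your proposal is missing; once you have it, the $\dfs$ bound falls out and the rest of your argument (including taking the min of the two) is sound.
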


Substituting the lower and upper bounds of the batched predecessor problem ($\plesl$) derived in Theorems 3 and 4 into the complexity of sorting in Theorem 2 gives us lower and upper bounds on the sorting problem $\sortN$.

\noindent\textbf{Remark 1:} One observes that in Theorem 4 ($\plesl$ lower bound), the transition between the two terms in the minimum happens at $w = B \log M$. This is because when large elements are very large, the bound obtained by algorithms that do not input the large elements too often (PLE-DFS) is smaller than algorithms that input large elements multiple times (e.g., PLE-BFS).

\noindent\textbf{Remark 2:} The upper and lower bounds on $\plesl$ are tight for a wide range of parameters. Moreover, if the first and third terms in the complexity of sorting (Theorem 2) dominate the complexity of the associated $\plesl$ problem, our sorting algorithms are tight. 

\noindent\textbf{Remark 3:} We would like to draw the reader's attention to the second terms in the lower and upper bounds of $\plesl$: 

\[ \frac{k}{B}\log S+ \frac{L}{wB}\log k \ \ \ \text{versus} \ \ \ k\log_{B} S+ \frac{L}{w}\log_{B}k \]

The gap appears because while our algorithm (the two-tree PLE-DFS) works on $B$-trees and gets a fanout of $B$ per I/O, our lower bound only forbids fanouts larger than $2^B$. One may wonder whether an upper bound of $B$ on the fan-out is possible, as is the case with almost all searching problems in external memory. Consider the perfectly interleaved case, i.e., $n$ short and long elements each, and $k=n$ (the ``nuts and bolts'' version)). We show that if $n$ is very large, \textit{there is an algorithm that achieves a fanout of $2^B$}! This algorithm does not exist for small $n$, but it nevertheless shows that obtaining an unconditional upper bound of a fanout of $B$ is not possible. 

\begin{theorem}\label{slices}
Consider the problem where $k$ short elements are given sorted, $k$ large elements each of size $w \geq B$ are given unsorted, and it is given that in the final sorted order the elements are perfectly interleaved. There exists $k_{0} \in \mathbb{N}$ such that for all input sizes $k > k_{0}$, there is an algorithm that after $O\left(\frac{k}{B} \log_{M/B} \frac{k}{B} \right)$ I/Os in preprocessing outputs the sorted order in $O \left( \frac{k}{B} \log k + \frac{kw}{B}) \right)$ I/Os.
\end{theorem}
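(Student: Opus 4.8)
The plan is to distribution‑sort the large elements, using the already‑sorted small elements as \emph{exact} splitters — a move that is only legitimate because the final order is perfectly interleaved, so the $(jk/\rho)$‑th smallest large element necessarily falls in the gap right after $s_{jk/\rho}$. In preprocessing I would build a static $\rho$‑ary search structure $\TT$ on the $k$ sorted small elements with branching factor $\rho = \Theta(2^{w})$, hence of depth $\log_{\rho}k = \Theta((\log k)/w)$; each internal node stores its $\rho-1$ splitter small elements, and $\TT$ is laid out level by level so that, for each level $j$, the splitters of \emph{all} level‑$j$ nodes form one contiguous, block‑aligned, sorted array of at most $k$ small elements. Building $\TT$ is no harder than sorting $k$ unit‑size keys, so it fits in the claimed preprocessing budget $O\bigl(\frac{k}{B}\log_{M/B}\frac{k}{B}\bigr)$ (taking $O(\log_{M/B}((\log k)/w))$ passes when $(\log k)/w$ exceeds $M/B$).

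The query phase sweeps $\TT$ top down, maintaining the invariant that the large elements sit in external memory grouped by the node of $\TT$ they currently occupy, in node order. At level $j$ I would stream in sync through the current large‑element list and the contiguous level‑$(j{-}1)$ splitter array: for each level‑$(j{-}1)$ node $v$, load its $\rho{-}1$ splitters, read $v$'s block of large elements, binary‑search each in RAM to find which of $v$'s $\rho$ children it belongs to, and append it to one of $\rho$ output buffers. Perfect interleaving guarantees that $v$'s $k/\rho^{j-1}$ large elements split into exactly $k/\rho^{j}$ per child, and that each child is again perfectly interleaved, so after $\log_{\rho}k$ levels every large element sits in its own leaf — its final gap — and one last pass writes the interleaved sorted output. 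Each level reads and writes each large element once ($\Theta(kw/B)$ I/Os) and, since $\sum_{j}\rho^{j}=O(\rho^{\log_{\rho}k})=O(k)$, streams only $O(k/B)$ splitters in total; hence the query phase costs $\Theta\bigl(\frac{kw}{B}\cdot\frac{\log k}{w}\bigr)+\Theta(kw/B)=\Theta\bigl(\frac{k}{B}\log k+\frac{kw}{B}\bigr)$. This is exactly fan‑out $2^{\Theta(B)}$: each level extracts $\Theta(w)$ bits about each of the $k$ large elements at an amortized cost of $\Theta(w/B)$ I/Os per large element, i.e.\ $\Theta(B)$ bits learned per I/O.

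The hard part is keeping a radix‑$\rho$ distribution step I/O‑efficient with $\rho=\Theta(2^{w})$: the $\rho$ output buffers (one block — equivalently one large element — each) must coexist in RAM with a node's splitters, which forces $M=\Omega(2^{w}w)$. This is the real content of the statement — intuitively, navigating any structure whose nodes fit in a single block extracts only $O(\log B)$ bits per I/O, so fan‑out $2^{\Omega(B)}$ is impossible without memory (roughly) exponential in $w$, which is precisely why no such algorithm exists for inputs that are small relative to the machine; the threshold $k_{0}$ is then used to push past the trivial regime $k=O(M/w)$ (where one simply sorts in RAM, or streams all small elements past in‑memory batches of large elements) and to absorb the rounding/overhead of the bottommost levels of the recursion — where a node may hold fewer than $\rho$ large elements, so $\rho$ must be shrunk and the per‑node splitter‑loading and buffer‑flushing re‑amortized — into the $\frac{k}{B}\log k$ term. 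None of these steps is individually deep, but lining up the constants so that every term is genuinely $O\bigl(\frac{k}{B}\log k+\frac{kw}{B}\bigr)$, and in particular confirming the $\Omega(2^{w}w)$ memory assumption is the right (and necessary) one, is where the work lies.
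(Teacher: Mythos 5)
Your approach has a fatal gap exactly where the theorem has its content. You propose a $\rho$-ary distribution sort with $\rho=\Theta(2^{w})$, which forces $\rho$ output buffers to coexist in RAM, i.e.\ $M=\Omega(2^{w}\cdot B)$ (or $\Omega(2^{w}w)$). But the theorem makes no such assumption, and the only regime in which it is nontrivial is precisely the opposite one, $w\geq\log M$: the paper opens its proof by observing that when $w<\log M$ the existing PLE-BFS bound $O\bigl(\frac{kw}{B}\log_{M/w}k\bigr)$ is already at most $O\bigl(\frac{k}{B}\log k\bigr)$, so nothing new is needed there. When $w\geq\log M$ your $2^{w}$ buffers simply do not fit, and your algorithm cannot be run. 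Your concluding claim that $M=\Omega(2^{w}w)$ is ``the right (and necessary)'' hypothesis is therefore backwards; the theorem is explicitly a counterexample to that intuition.

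The misconception driving this is stated in your last paragraph: that ``navigating any structure whose nodes fit in a single block extracts only $O(\log B)$ bits per I/O, so fan-out $2^{\Omega(B)}$ is impossible without memory exponential in $w$.'' That is false, and dispelling it is the point of the theorem. One block of $B$ short pivots, chosen so that pivot $i$ is the median of the current search interval of the $i$-th large element resident in memory, halves \emph{all $B$} search intervals simultaneously---that is $B$ bits, fan-out $2^{B}$, from one block. The cost is not memory but preprocessing: to know which block of $B$ pivots to load next, you must have precomputed one block for each of the $2^{B}$ possible outcomes of the $B$ comparisons, recursively. This is the paper's $2^{B}$-tree $\tree(\mathcal{A},2^{B},\alpha)$, which has $\Theta(2^{\alpha B})$ blocks. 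The entire difficulty is then balancing the depth $j$ of these trees (chosen as $j=\frac{\log\log_{M/B}(k/B)-\log B}{B-1}$) so that (i) the total preprocessing $\frac{k}{2^{j}B}\cdot B\cdot 2^{jB}=k\cdot 2^{j(B-1)}$ stays within $O\bigl(\frac{k}{B}\log_{M/B}\frac{k}{B}\bigr)$, and (ii) the large elements need only $g=(\log k)/j$ passes, giving $\frac{kw}{B}\cdot g=O(\frac{k}{B}\log k)$ once $k$ is large enough that $Bw<\log\log_{M/B}k$. The threshold $k_{0}$ in the statement comes from requiring $j=\omega(1)$, i.e.\ $\log\log_{M/B}(k/B)\gg B$---a vastly stronger condition than the ``trivial regime $k=O(M/w)$'' you attribute it to. None of your three paragraphs contains the precomputed-outcome idea, so the proposal does not prove the theorem.
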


\begin{proof}
We will assume that $w \geq \log M$. If not, then $ (k \log k)/B > (kw/B) \log_{M} k$, and we already have an upper bound (the BFS algorithm in Section $6$) that has complexity $O((kw/B) \log_{M} k)$ I/Os. Also for simplicity, we will assume that $k$ and $B$ are powers of two.

Since we are looking for an upper bound of $O\left( (k \log k)/B + kw/B \right)$, we need to achieve roughly a fan-out of $2^B$ per I/O. The basic idea is the following: assume there are $B$ large elements in memory, and their search spaces are $S_{1},\cdots S_{B}$. If an incoming short block has the medians of all the $S_{i}$s, then the input of this short block reduces every $S_{i}$ by a factor of $2$, and we get the desired $2^B$ fan-out. Of course, for this to continue, we would need the appropriate short block (containing the medians of the new search spaces, and so on). Thus, it is intuitively clear how to achieve the upper bound if one were allowed, say ${k \choose B}$ preprocessing. This is huge, and the main question is whether we can reduce it to $O\left( (k \log k)/B + kw/B \right)$.

To describe our algorithm, we will need a smaller data structure first, which we explain next.

\noindent\textbf{$2^{B}$ Tree:}
A $2^{B}$ tree for $\alpha$ levels on a sorted set $\mathcal{A}$ of $A$ unit-sized elements (denoted as $\tree(\mathcal{A},2^{B},\alpha)$) is a tree that performs the following : Assume $B$ long elements have to find their positions among elements in $\mathcal{A}$, and that they are in memory. Initially they could be anywhere (so their search space size is $A$). $\tree(\mathcal{A},2^{B},\alpha)$ is a data structure that reduces their search spaces to size $A/(2^{\alpha}B)$ using $\alpha + 1$ short block I/Os.
We briefly describe how to build this tree.
In the first step, we bring in the root block of the $B$-tree on $\mathcal{A}$, achieving a fan-out of $B$ for every long element.
 Partition the set $\mathcal{A} : = \{e_{1},...e_{A}\}$ into $B$ equally sized (sorted) subsets $\mathcal{A}_{i} = \{e_{(iA/B)+1},...e_{(i+1)A/B}\}$ ($0 \leq i \leq B-1$). Put the $B$ middle elements ,$\{e_{(2j+1)A/2B}\}_{j=0}^{B-1}$, into a block, which serves as the root of $\tree(\mathcal{A},2^{B},\alpha)$. Assume that the $i$th long element points to $A_{i}$ (so $A_{i}$ is its current search space). Upon comparison of this root block with the $B$ long elements in memory, each long element's search space is reduced by a factor of $2$, the total fan-out being $2^{B}$. In each such permutation a long element's search space is now either the left half or the right half of its original. For each of the $2^{B}$ permutations, make a block of $B$ short elements comprising of the middle pivots of the new search spaces corresponding to the permutation. These are the immediate children of the root node. We recurse on these nodes now and stop when we have built $\alpha$ levels of this tree.

\paragraph{Preprocessing Phase}

\begin{enumerate}

\item Define 
\[j = \frac{\log \log_{M/B} k/B - \log B}{B-1}\] and $g = (\log k - j) / j$.

Build $\tree(k, 2^{B} , j)$ on the sorted set of the $k$ short elements.\textit{For the algorithm to work, $j = \omega(1)$, which automatically puts a restriction on $k$. This is the $k_{0}$ referred to in the statement of the observation, and the algorithm works only if $k>k_{0}$.} Also, we will assume that $k$ is sufficiently large that $Bw < \log \log_{M/B} k$.
 
 \item Let $\{\mathcal{U}_{i}\}_{i=1}^{g}$ be a collection of sets, where $\mathcal{U}_{i}$ is the (sorted) set of all nodes at depth $i.j$ (so $i.j$ levels from the root) in the binary tree on $k$ short elements. Let $U_{i}$ be the cardinality of $\mathcal{U}_{i}$ ( $U_{i} = 2^{ij}$).  For every $\mathcal{U}_{i}$ do the following :
 
 \begin{itemize}
 \item Starting from the left, divide the set $\mathcal{U}_{i}$ into groups of size $B$.
  
 \item Let $G_{ir} = \{v_{1},...,v_{B}\}$ be such a group of nodes ($1 \leq r \leq U_{i}/B$). Let $\mathcal{A}_{ir}$ be the union of the search spaces of these nodes (union of the leaves of their subtrees). 
 
 \item Build the trees $\tree(\mathcal{A}_{ir},2^{B},j)$ for all groups $G_{ir}$ above.
 \end{itemize}
 
\end{enumerate}

\paragraph{Querying Phase}

Here we describe how to perform the query search :
\begin{enumerate}

\item Divide the set of $k$ large elements into groups of $B$ long elements arbitrarily. For every group, bring it in memory and perform the search on $\tree(k, 2^{B} , j)$ (sending every large element $j$ levels down the binary tree on $k$).
\item If all large elements have been flushed at least $ij$ levels down (where $1\leq i \leq g$), for each group $G_{ir}$ in $\mathcal{U}_{i}$, do the following :

\begin{itemize}

\item Find the set of large elements pointing to a node in $G_{ir}$ (in other words, all large elements which have been found to belong to $\mathcal{A}_{ir}$, which, by definition, is the union of the search spaces of nodes in $G_{ir}$). Let this set be $\mathcal{Q}$, with $Q$ long elements. \textit{Note that by the definition of the \kk problem (large and small elements perfectly interleaved), $Q = A$, i.e., there are as many  large elements pointing to a node in $G_{ir}$ as the number of small elements/leaves in the subtrees rooted at nodes in $G_{ir}$.}

\item Divide $\mathcal{Q}$ into groups of size $B$, bring each group into memory one at a time and flush it through $\tree(\mathcal{A}_{ir},2^{B},j)$.
\end{itemize}

\end{enumerate}

\begin{lemma}[Complexity Analysis]
The I/O complexity of the above algorithm is $O((k\log k)/B + kw/B)$ I/Os.
\end{lemma}

\begin{proof}
Each short block I/O gets a fan-out of $2^B$, as it halves the search space of $B$ large elements in memory. The total fan-out required is $k!$, so the number of short block I/Os is $O((k \log k)/B$.

The large elements are swiped $g$ times (once after every $j$ levels). Each time costs $kw/B$ I/Os, requiring a total of $O\left(kw\frac{\log k}{\log \log_{M/B} k}\right)$ I/Os (by definition of $g$), which is $O(kw/B)$ since $k$ is sufficiently big to guarantee $Bw < \log \log_{M/B} k$.
\end{proof}

\paragraph{Preprocessing Analysis}

\begin{lemma}
Starting with a binary tree on $\mathcal{A}$, $\tree(\mathcal{A},2^{B},\alpha)$ can be built in $O(B.2^{\alpha B})$ I/Os.
\end{lemma}

\begin{proof}
There are $O(2^{\alpha B})$ blocks in $\tree(\mathcal{A},2^{B},\alpha)$. Each block takes a maximum of $B$ I/Os, since all the levels of the binary tree are already built and one only needs to go one level down the binary tree in order to bring the middle pivot.

\end{proof}

The number of trees of type $\tree(\mathcal{A},2^{B},j)$ built by our algorithm can be bounded by the last level. There are $(k/B)/2^j$ nodes in the last level on which this data structure is built. Thus the total cost is bounded by 

\[ (B.2^{jB})\frac{k}{2^{j}B} = k2^{j(B-1)} = \frac{k}{B} \log_{M/B} \frac{k}{B}, \]

where the last inequality follows by the definition of $j = \frac{\log \log_{M/B} k/B - \log B}{B-1}$. This finishes the proof of the preprocessing claim.\qed
\end{proof}

\vspace{-1mm}\section{Conclusion and Open Problems}

\vspace{-1mm}We derived upper and lower bounds on sorting and batched predecessor in the RAM and DAM models, when comparison or I/O costs depend on the length of the items being compared. In many settings, we show that the optimal sorting algorithm involves the optimal batched predecessor problem as a subroutine, and develop algorithms for the batched predecessor problem. 

While our results are for the two-size setting, we would like to point out that our algorithms generalize to the multiple-sizes setting. However, generalizing our lower bound techniques to the multiple-size setting requires more ideas.

\bibliographystyle{splncs04}
\bibliography{latin-sorting.bib}
\end{document}